\documentclass[nonacm,sigplan,screen]{acmart}
\usepackage{graphicx} 
\usepackage{cleveref}
\usepackage{enumitem}
\usepackage{cancel}
\usepackage{stmaryrd}
\usepackage{adjustbox}
\usepackage{cprotect}
\usepackage{fancyvrb}
\usepackage{xspace}
\usepackage{mathpartir}
\usepackage[dvipsnames]{xcolor}
\usepackage{xypic}
\usepackage{listings}
\usepackage{enumitem}
\usepackage{algorithm}
\usepackage{algpseudocode}
\usepackage{balance}

\newtheorem{theorem}{Theorem}
\newtheorem{assumption}[theorem]{Assumption}
\newtheorem{lemma}[theorem]{Lemma}

\setlist[itemize]{leftmargin=1.5em}

\lstdefinelanguage{vega}{
    morekeywords={transform,ops,fields,type,groupby,as,expr,\$schema,title,width,height,data,name,
      range,domain,field,orient,scale,from,encode,enter,value,band,url,source,scales,axes,marks,x,y,x2,update,exit},
    morekeywords=[2]{undefined,null,NaN,true},
    morekeywords=[3]{aggregate,filter,formula},
    sensitive=true,
    morestring=[b]",
    morecomment=[l]{//}
}
\lstdefinestyle{vegaStyle}{
    language=vega,
    basicstyle=\linespread{0.9}\sffamily,
    numbers=none,
    xleftmargin=1em,
    columns=fullflexible,
    frame=none,
    showstringspaces=false,
    keywordstyle=\color{MidnightBlue},
    keywordstyle=[2]\color{Sepia},
    keywordstyle=[3]\color{RedViolet},
    stringstyle=\color{Bittersweet},
    commentstyle=\color{Gray}\itshape,
    literate=
      {datum}{{{\color{Sepia}{datum}}}}1
      {x2}{{{\color{MidnightBlue}{x2}}}}1
      {"aggregate"}{{{\color{RedViolet}{"aggregate"}}}}1
      {"filter"}{{{\color{RedViolet}{"filter"}}}}1
      {"formula"}{{{\color{RedViolet}{"formula"}}}}1
      {0}{{{\color{OliveGreen}{0}}}}1
      {1}{{{\color{OliveGreen}{1}}}}1
      {2}{{{\color{OliveGreen}{2}}}}1
      {4}{{{\color{OliveGreen}{4}}}}1
      {9}{{{\color{OliveGreen}{9}}}}1
      {:}{{:}}1
}

\def\uv#1{``#1''}

\newcommand{\de}{\mathrel{\mathop{\Coloneqq}\ \ }}
\newcommand{\densp}{\mathrel{\mathop{\Coloneqq}}}
\newcommand{\orr}{\mathbin{\mathop{|}}}
\newcommand{\out}{{\rm out}}
\newcommand{\new}{{\rm new}}
\newcommand{\inn}{{\rm in}}
\newcommand{\init}{{\rm init}}
\newcommand{\prim}{{\rm primitive}}

\newcommand{\spec}{\mathsf{spec}}

\newcommand{\truthy}{\mathsf{tru}}
\newcommand{\falsy}{\overline{\truthy}}

\newcommand{\bool}{\mathsf{boolean}}
\newcommand{\num}{\mathsf{number}}

\newcommand{\str}{\mathsf{string}}

\newcommand{\ls}{\mathsf{ls}}

\newcommand{\groupOf}{\mathsf{groupOf}}
\newcommand{\formula}{\mathsf{\color{RedViolet}formula}}
\renewcommand{\state}{\mathsf{state}}

\newcommand{\cross}{\mathsf{\color{RedViolet} cross}}

\newcommand{\inputt}{\mathsf{input}}
\newcommand{\error}{\mathsf{error}}
\newcommand{\add}{\mathsf{add}}
\newcommand{\gs}{\mathsf{cs}}

\newcommand{\remove}{\mathsf{remove}}
\newcommand{\update}{\mathsf{update}}

\newcommand{\filter}{\mathsf{\color{RedViolet}filter}}
\newcommand{\data}{\mathsf{\color{MidnightBlue} data}}
\newcommand{\round}{\mathsf{\color{Bittersweet} round}}
\newcommand{\contains}{\mathsf{\color{Bittersweet} round}}

\newcommand{\op}{\mathsf{op}}

\newcommand{\summ}{\mathsf{\color{MidnightBlue} sum\ }}
\newcommand{\avg}{\mathsf{\color{MidnightBlue} avg\ }}
\newcommand{\sumi}{\mathbf{sum}}

\newcommand{\DataChange}{\mathsf{DataChange}}

\newcommand{\countt}{\mathsf{\color{MidnightBlue} count}}
\newcommand{\counta}{\mathsf{count}}
\newcommand{\tto}{\longrightarrow}
\newcommand{\agg}{\mathsf{\color{RedViolet} agg}}
\newcommand{\aggb}{\mathsf{agg}}

\newcommand{\as}{\mathbin{\mathop{\mathsf{\color{MidnightBlue} as}}}}

\newcommand{\lookup}{\mathsf{lookup}}
\newcommand{\buildT}{\mathsf{buildT}}
\newcommand{\buildD}{\mathsf{buildD}}
\newcommand{\buildS}{\mathsf{buildS}}
\newcommand{\datum}{\mathrm{\color{Sepia} datum}}
\newcommand{\invals}{\mathsf{invals}}

\newcommand{\tinyvega}{\textsc{$\upmu$Vega}\xspace}

\newcommand{\vinl}[1]{\lstinline[style=vegaStyle]{#1}}
\newcommand{\vkvd}[1]{\textsf{\color{MidnightBlue} #1}}
\newcommand{\vident}[1]{\textsf{\color{Bittersweet} #1}}
\newcommand{\gnum}[1]{{\color{OliveGreen} #1}}

\begin{document}
\title[Formal Semantics and Type System for Vega Data Transformations]{Formal Semantics and Type System for\\ Vega Data Transformations}
\author{Kristýna Petrlíková}
\affiliation{%
  \institution{Charles University}
  \city{Prague, Czech Republic}\country{}
}
\email{auburn@kam.mff.cuni.cz}

\author{Tomas Petricek}
\affiliation{%
  \institution{Charles University}
  \city{Prague, Czech Republic}\country{}
}
\email{tomas@tomasp.net}


\begin{abstract}


Vega is a popular declarative language for creating interactive data visualizations.
It supports reactive data transformations using its streaming
dataflow architecture. Despite its widespread adoption, the exact semantics of Vega
is subtle and poorly documented.
%
This leads to incorrect or confusing visualizations and difficult-to-understand error messages.

This paper makes two contributions. First, we
define a graph-based operational semantics, providing a precise model of the streaming
dataflow architecture of Vega. Second, we present a~type system for the core data
transformation language of Vega, which can prevent a range of common errors. 
We show that our type system is sound with respect
to the semantics. 

%
%
%
While the dataflow architecture of Vega closely resembles well-studied models such as functional reactive programming and adaptive computation, there are important differences. The novelty of our work lies in making these precise and providing static analysis for such a reactive data visualization language. The result is a checker for Vega that can catch common real-world errors.
\end{abstract}

\keywords{Data Visualization, Dataflow, Functional Reactive Programming, Operational Semantics}

\maketitle

\section{Introduction}
\label{sec:intro}

Vega \cite{satyanarayan-2014-vega} is a declarative language for creating data visualizations inspired by the grammar of graphics~\cite{wilkinson-2012-gg}.
Vega makes it possible to define rich interactive visualizations, but the exact working of
the system is complex and poorly documented. 

First, Vega is built on top of JavaScript and
inherits some of its subtle semantics. Moreover, the implementation uses many defaults that might appear unexpected or unintuitive. For example, the default output fields for the \texttt{stack} transform are \uv{\texttt{y0}} and \uv{\texttt{y1}}, whereas for the \texttt{bin} transform they are \uv{\texttt{a}} and \uv{\texttt{b}} \cite{vega-transforms-docs}.

Second, Vega uses a~streaming architecture
\cite{reactive-vega-architecture-2016}, which makes understanding dataflow challenging.
As a result, debugging Vega visualizations is a known challenge \cite{vega-debugging-2016}.


\subsubsection*{Debugging Vega Errors}
Consider the visualization in \autoref{fig:books-ok}, which takes a collection of books and
aggregates the data to show the average word count per book. The transformations are specified
as follows: 

\begin{lstlisting}[style=vegaStyle]
transform: [
  { type: "aggregate", groupby: ["classification"],
   ops: ["mean", "count"], fields: ["words", ""],
   as: ["avg_words", "book_count"] },
  { type: "filter", "expr": "datum.book_count >= 10" },
  { type: "formula", as: "avg_hundreds",
   expr: "datum.avg_words / 100" }
]
\end{lstlisting}

\begin{figure}[t]
  \centering
  \vspace{1em}
  \includegraphics[width=0.5\textwidth]{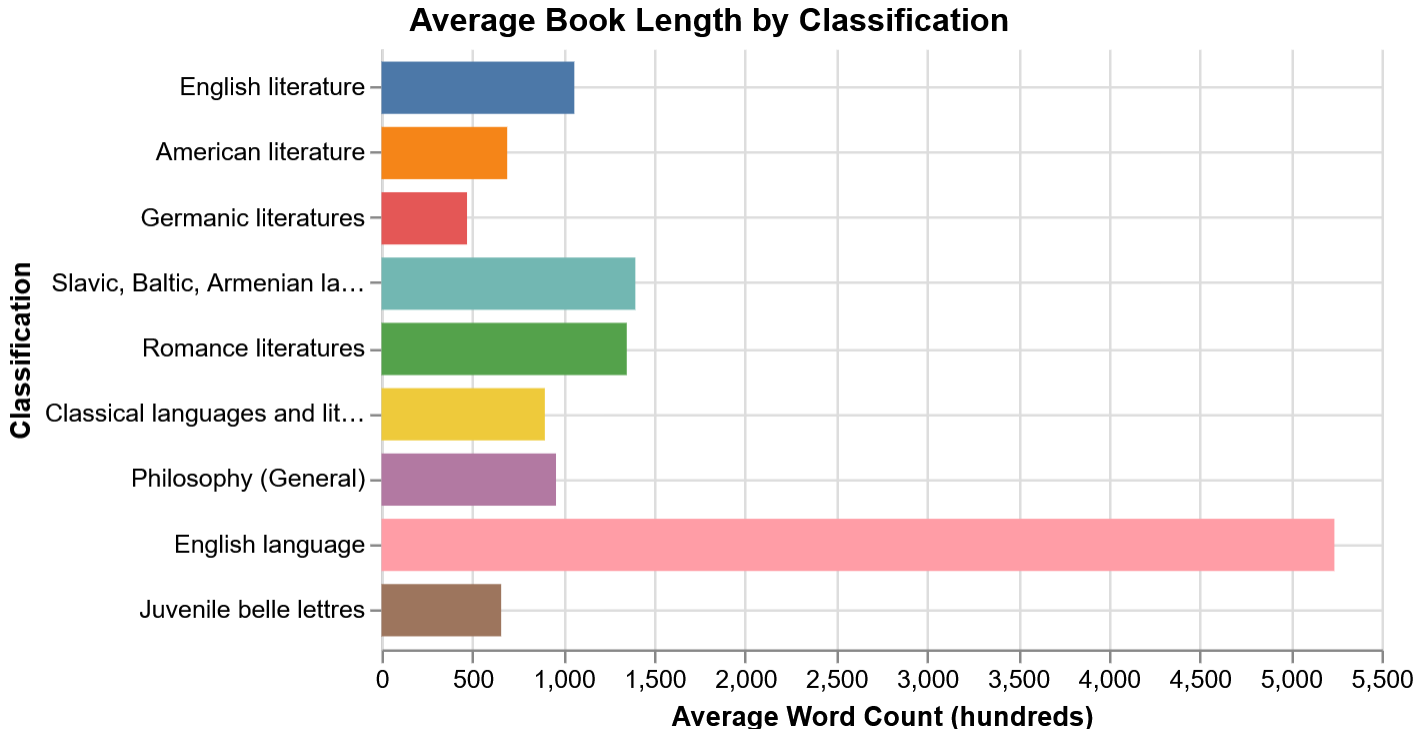}
  \caption{Vega visualization of average book lengths.}
  \label{fig:books-ok}
\end{figure}


The example first uses the \vinl{aggregate} transform to group books by their
\vident{classification} and computes an average word count and a number of
books in each group.
It then adds the word count in hundreds as \vident{avg\_hundreds}
and removes groups with fewer than 10 books.

Let's say that we accidentally type \vident{avgwords} instead of \vident{avg\_words}
in the \vinl{formula} transform:

\begin{lstlisting}[style=vegaStyle]
{ type: "formula", as: "avg_hundreds",
 expr: "datum.avgwords / 100" }
\end{lstlisting}

Vega displays an empty visualization and reports a warning 
``Infinite extent for field avg\_hundreds\_start''. The reason is that accessing a non-existent
field in JavaScript returns \vinl{undefined} and \vinl{undefined/100} evaluates to
\vinl{NaN}. Vega thus tries to draw a chart where all bars have the same \vinl{NaN} length
and fails to calculate the scale for the axis.

Now, let's say that we try to change the filter, after the aggregation, to
look only at non-English books:

\begin{lstlisting}[style=vegaStyle]
{ type: "filter", "expr": "datum.language != 'en'" }
\end{lstlisting}

Vega displays a visualization that includes all groups, counting all the books in the input
dataset. The reason is that the earlier \vinl{aggregate} operation produces values that contain
only the group key (\vident{classification}) and newly computed aggregations (\vident{avg\_words}
and \vident{book\_count}). The field \vident{language} is no longer present and
\vinl{undefined != "en"} evaluates to \vkvd{true}.

The root cause of both issues is access to an invalid field, but they are exhibited in
different ways that are easy to overlook. Moreover, the exact source of the error is not 
easy to trace because of the streaming dataflow architecture~of~Vega.

\subsubsection*{Understanding Vega Architecture}
Vega makes it possible to create interactive data visualizations, in which user interface events such as moving the mouse cursor
can cause items in the dataset to be added, updated or removed. To support this,
the Vega runtime \cite{reactive-vega-architecture-2016} builds a directed acyclic dataflow graph
that represents dependencies between components. The nodes (called \emph{operators} in Vega) can
have an internal state
. The state is used, e.g., by aggregation to keep track of values
encountered so far. The source nodes (those without incoming edges) correspond to data sources or
user interface events.

When a dataset is created or an event occurs, the source node creates a \emph{changeset}
containing a set of newly created, updated, and removed values. The changeset then \emph{pulses} (propagates)
along the graph in topological order and is transformed in various ways by the individual nodes. 
For example, the \vinl{filter} transform marks updated values that
no longer satisfy the given condition as removed.

The streaming dataflow architecture makes error reporting more challenging. It also makes
defining accurate semantics of Vega an interesting theoretical challenge.

\begin{figure}
\begin{lstlisting}[style=vegaStyle]
{ $schema: "https://vega.github.io/schema/vega/v5.json",
 title: "Average Book Length by Classification",
 width: 400, height: 240,
 data: [ // Define data sources and transformations
  { name: "books", url: "data/books.json" },
  { name: "aggs", source: "books", transform: [ ... ] }
 ],
 scales: [ ... ], // Scales map data values to visual values
 axes: [ ... ],   // Axes show scales in the visualization
 marks: [ ... ]   // Define visual marks in the visualization
};
\end{lstlisting}
\vspace{-1em}
\caption{The structure of a Vega specification.}
\label{fig:spec}
\vspace{-1em}
\end{figure}

\subsubsection*{Related Work}
Vega is well-studied from the user-centric perspective
\cite{satyanarayan-2014-vega,vega-lite-2017}, but the streaming
dataflow architecture of Vega has only been described in high-level terms
\cite{reactive-vega-architecture-2016}. The execution model is akin to event-driven
functional reactive programming \cite{elliott-1997-fran,WanEFRP} and adaptive computation
\cite{acar-2002-adaptive}, which have been studied in more depth.

The semantics of reactive programming has been described both in operational
\cite{ischard-2025-frp} and denotational style \cite{elliott-2009-pushpull,krishnaswami-2011-ultrametric}.
Oeyen \cite{oeyen-2023-graph} presents a graph-based model, which is close to how the Vega runtime
operates, but the model works with singular values, rather than updates to a dataset.

Verifying the properties of Vega visualizations has been done through linters \cite{chen2021,mcnutt2018linting},
but these focus mainly on correcting misleading visualizations. Vega errors can also be
detected using debugging and profiling tools \cite{yang-2023-profiling,vega-debugging-2016}.
Static checking of specifications is an under-explored area.

\subsubsection*{Contributions} We present a formal model and a type system for
Vega data transformations. Our contributions are:
\begin{itemize}[itemsep=3pt]
\item We present the \tinyvega calculus, which captures the essence
  of Vega data transformations (\autoref{sec:essenc}), and formally model its evaluation using
  a graph-based operational semantics based on changeset propagation (\autoref{sec:sem}).

\item We define a type system for \tinyvega (\autoref{sec:types}) that ensures the
  correctness of Vega specifications, and shows its soundness with respect to our
  operational semantics (\autoref{sec:sound}).

\item We introduce a CLI-based Vega specification checker that can check a subset of real-world Vega
  specifications for common errors such as invalid field accesses (\autoref{sec:checker}), and which accounts for Vega's unexpected defaults.

%

\end{itemize}

\section{Vega Overview}
A data visualization in Vega is defined by a JSON specification.
\autoref{fig:spec} shows the structure of the specification of the data visualization discussed
in \autoref{sec:intro}. Although the focus of this paper is on data transformations, we give
a~brief overview of the remaining components here. In the second part of this section, we discuss the properties of the runtime.

\subsection{Specification}
The most important parts of a Vega specification are sections that define the data sources
and transformations (\vinl{data}), mapping of data to visual attributes (\vinl{scale})
and the visual elements of the chart (\vinl{axes}, \vinl{marks}).

\subsubsection*{Data Sources and Transformations}
The \vinl{data} block in \autoref{fig:spec} defines two datasets. The first, \vident{books},
reads data from a specified JSON file. The second, \vident{aggs}, is obtained
from the \vinl{source} dataset \vident{books} through a series of data \textit{transformations} discussed
in \autoref{sec:intro}. Besides these, Vega offers a plethora of built-in transformations (total of 51 at the time of
writing) that reshape and filter data, calculate new fields, or derive new data streams
\cite{vega-transforms-docs}. We discuss the different kinds of transformations in \autoref{sec:essenc}.
It is worth noting that dataset definitions in Vega are processed sequentially and Vega reports
an error if the \vinl{source} dataset has not yet been defined.

\subsubsection*{Scales and Axes}
Vega separates the mapping of values from the data domain (e.g., classification,
numbers of words) to the visual domain (pixels on screen) from the specification of visual elements.
The former is done in the \vinl{scales} block.

\autoref{fig:scales-axes-marks} shows two scale definitions for our running example. The
\vident{yscale} is a scale of type \vident{band}, which defines mapping from a discrete domain
(string values) to a continuous numerical range (pixels). The \vinl{range} of the scale
sets the maximal value of the target domain to the height of the visualization. Finally,
\vident{domain} specifies the source of the data for the scale. This is the \vident{classification}
field of the data in the computed dataset named \vident{aggs}. The \vident{xscale} is
a linear scale mapping a continuous range of values (word counts) to a numerical range
defined by the width of the visualization. This time, the source uses the computed
\vident{avg\_hundreds} field.

\begin{figure}
\begin{lstlisting}[style=vegaStyle]
scales: [
  { name: "yscale", type: "band", range: "height",
   domain: { data: "aggs", field: "classification" } },
  { name: "xscale", type: "linear", range: "width",
   domain: { data: "aggs", field: "avg_hundreds" } }
],
axes: [
  { orient: "left", scale: "yscale", title: "Classification" },
  { orient: "bottom", scale: "xscale", title: "Avg Words" }
],
marks: [ {
  type: "rect", from: { data: "aggs" },
    encode: { enter: {
      x: { scale: "xscale", value: 0 },
      x2: { scale: "xscale", field: "avg_hundreds" },
      y: { scale: "yscale", field: "classification" },
      height: { scale: "yscale", band: true },
    } }
} ]
\end{lstlisting}
\vspace{-0.5em}
\caption{Scales, axes and marks for our running example.}
\vspace{-0.5em}
\label{fig:scales-axes-marks}
\end{figure}

Axes are visual elements used for displaying information about scales. In
\autoref{fig:scales-axes-marks}, we specify that axes should appear on the left and bottom,
we define the \vinl{title} for each of the axes and we specify what scales they represent.

\subsubsection*{Marks}
The content of a chart is defined in \vinl{marks}. Marks encode data from some data source as
geometric elements such as rectangles. Vega includes many types of marks, ranging from lines,
paths and rectangles to icons and text.

Our example, shown in \autoref{fig:scales-axes-marks}, defines one set of marks whose \vinl{type}
is \vident{rect}. The data source, specified using \vinl{from}, is the computed dataset
\vident{aggs}. In Vega, the source for a mark can be any dataset, but also another mark.
This feature, referred to as \emph{reactive geometry}\cite{VegaRG} makes it possible to specify new set
of marks relative to existing previously defined marks, for example to add text labels to
the visual display. The \vinl{from} definition can also specify faceting where a single dataset
is split across multiple groups of marks.

Each mark has a set of encoding rules that map the backing data to the visual properties of
the mark. There is a number of different encoding rules, applied in different stages of a
mark's lifetime. The three basic encoding rules cover the cases when a mark is first created
(\vinl{enter}), when data is updated (\vinl{update}) and when it is removed (\vinl{exit}).
In our example, we only specify the \vinl{enter} rule for creating the marks.

Depending on the type of the mark, the encoding rule needs to define properties that are
used for drawing the mark. In case of a horizontal rectangle, we define its range using
a pair of X coordinates (\vinl{x} and \vinl{x2}), Y offset (\vinl{y}) and its height
(\vinl{height}). The object that specifies the value, called \emph{value reference} in
Vega, can be a constant (\vinl{value}), reference to a field (\vinl{field}) as well
as a signal or a color palette reference (not discussed here). Optionally, a specified
\vinl{scale} can be used to transform domain values to the visual domain. Finally, the
\vinl{band} field can be used to refer to the size of the band computed by the scale of type
\vident{band}.

\begin{figure*}[t]
\vspace{-1em}
\begin{alignat*}{3}
		\text{Field names}\quad&&f&~\de \texttt{String}\\
		\text{Primitive values}\quad&&\mathsf{v}&~\de b \in \texttt{Boolean} \orr i \in \texttt{Number} \orr s \in \texttt{String} \\
    \text{Path access}\quad&&p&~\de \datum \orr p.f\\
		\text{Simple expressions}\quad&&a&~\de\mathsf{v} \orr p \orr a_1 + a_2\orr a_1 - a_2 \orr a_1 \times a_2 \orr a_1 \mathop{/} a_2 \orr \mathit{builtin}(e_1, \ldots, e_n)\\
		\text{General expressions}\quad&& e&~\de a \orr[e_1,\ldots,e_n] \orr\{f_1\colon e_1, \ldots, f_n\colon e_n\} \\
		\text{Aggregation operators}\quad&& \op &~\de \summ p_{\inn} \as f_{\out} \orr \avg p_{\inn} \as f_{\out} \orr \countt\as f_{\out}\\
		\text{Transforms}\quad&& t &~\de \agg( p_1,\ldots,p_n,\op_1,\ldots,\op_k)
		    \orr\cross(f_a, f_b) \orr\formula(e, f_{\out}) \orr\filter(e)\\
		\text{Dataset identifiers}\quad&& n &~\de \texttt{String} \\
		\text{Dataset definitions}\quad&& d  &~\de \data(n_{\out}, n_{\inn}, t_1, t_2, \ldots, t_k)\\
    \hspace{7em} 
    \text{Specifications}\quad&& \spec &~\de d_1, \ldots, d_k
\end{alignat*}
\vspace{-1em}
	\caption{Syntax of \tinyvega, which captures the essence of Vega.}
	\label{fig:syntax-new}
\end{figure*}

\subsubsection*{Event Streams and Signals} 
Interactive data visualizations can be defined using event streams and signals. Event streams model primitive DOM events such as \texttt{click} or \texttt{mouseover}. More complex event streams may be formed by filtering or merging existing ones.

Signals are dynamic variables that can be used to make a value in the specification time-varying. They are defined by an initial value and a set of event handlers that update the signal's value. A signal with event handlers is reactive and re-evaluates as events occur, whereas one with only an initial value is non-reactive; the two correspond to reactive and non-reactive behaviors in E-FRP \cite{WanEFRP}. Signals cannot be used for dataset, signal, or scale names, which must be fixed in their definitions.
Signals can also be composed to form complex dependency graphs. While there is no ordering requirement for signals (as opposed to dataset definitions), Vega requires the resulting graph to be acyclic.

We focus on modeling the streaming dataflow that propagates dataset updates through data transformations. Event streams and signals would be an interesting addition as they make Vega more dynamic, akin to functional reactive programming frameworks. 

\subsection{Runtime}
\label{subsec:runtime}
After the specification has been parsed and validated using JSON Schema \cite{JSONSchema,VegaSchema}, Vega constructs
a dataflow graph based on the specification. It uses the graph to efficiently propagate updates in reaction to user input by traversing the graph.

\subsubsection*{Operators}
The nodes of the graph are called \emph{operators}. Each operator has a value, an optional update function, and a set of \emph{parameters} whose changes trigger a value update. Parameters can be primitive values (which never change) or other operators. 

For example, the \vinl{formula} transformation in an earlier example has parameters \vinl{expr} and \vinl{as}.
If the expression in the \vinl{expr} string refers to a signal name, the graph node (operator) corresponding to the signal becomes a parameter of the \vinl{expr} expression. If the transform is part of a \vinl{transform} definition, then it includes a special parameter named \emph{pulse} that refers to the previous transform in the pipeline (if it exists).

When an operator $v$ depends on an operator $u$ as a parameter, there exists a directed edge from $u$ to $v$. There can be multiple edges between two nodes (e.g., when a single signal name is used in both \vinl{expr} and \vinl{as}), and there can even be self-loops (e.g., a self-incrementing signal that reacts to user clicks). The only important restriction is that, excluding self-loops, the graph must be acyclic, so that there exists a~topological ordering on the operators.

\subsubsection*{Pulses}
At runtime, Vega propagates changes to individual data tuples through the dataflow graph. When the source data changes, it propagates the corresponding change using a \emph{pulse}. Pulses carry references to \emph{changesets}, which in turn contain references to data tuples that were added, removed, or updated. 

Upon receiving a pulse, an operator pulls current values from its parameters and calls its own update function. This updates the operator's value and returns a modified pulse which is then passed to successors. The propagation is orchestrated by a dataflow scheduler which ensures that the changes propagate in topological order (an operator is thus only evaluated once its dependencies have processed their pulses).

\subsubsection*{Mutation and Tracking Copies}
During the processing of pulses, data tuples are announced as being added to, removed from, or modified in the source dataset. In case of modification events, the change is applied directly on the original data tuple, mutating it in-place.

Mutation is only a local measure as the mechanism would easily break if the same source tuples were modified throughout the whole graph. Thus, in a single data pipeline, all data tuples are modified in-place, but on the pipeline's boundaries, copies of original tuples are created, and updates to these copies are redirected using lookup tables. As the tuples can get almost arbitrarily modified throughout the graph, they receive unique numerical identifiers. The primary usage of identifiers is tracking observed tuples in transform operators by using the identifiers as keys in various lookup tables.

\section{The Essence of Vega}
\label{sec:essenc}
We capture the essence of Vega data transformations using a small formal model
that we call the \tinyvega calculus. We discuss possible extensions to the calculus to model
a larger portion of the language (most importantly, signals) in \autoref{subsec:accuracy}.

The syntax of the \tinyvega calculus is shown in \autoref{fig:syntax-new}.
It models the fact that a specification defines new datasets computed from
existing datasets through transformations, includes four basic transformations and
models the restricted expression language used in Vega.

\subsubsection*{Specifications}
A \emph{specification} $s$ consists of a sequence of \emph{dataset} definitions that
define new named datasets by 
transforming earlier datasets through a sequence of
\emph{transforms}~$t$. We refer to the former as input datasets. We assume that those are defined outside of the formal model.

As an example, the following models a subset of the example discussed in \autoref{sec:intro}.
We have an input dataset \textit{books} and we want to count the average
number of words per book \textit{classification} and report the average in hundreds:
\[
\begin{array}{l}
\data(\mathit{books},\mathit{aggs},\\
\quad \agg(\datum.\mathit{classification}, \\
\qquad \avg \datum.\mathit{words} \as \mathit{avg\_words}, \countt\as \mathit{book\_count}), \\
\quad \formula(\round(\mathit{avg\_words}\mathop{/}\gnum{100}), \mathit{avg\_hundreds})~)
\end{array}
\]
The specification assumes an input dataset \textit{books} and defines a new dataset \textit{aggs}.
It performs two transformations. First, it aggregates the data using \textit{classification} as
the key, counting the books and computing \textit{avg\_words} by averaging the number of
\textit{words}. Then, it uses the $\formula$ transformation to define a new field
\textit{avg\_hundreds}, which is computed by dividing the number of words by 100 and rounding the result.

\subsubsection*{Transformations}
Vega uses a streaming-based execution model. When processing a stream of input data,
a transformation takes a single data tuple (datum) and outputs a number of updates to
be propagated to the subsequent steps of the computation. The shape of the output data
tuple depends on the kind of transformation. There are four common kinds:

\begin{enumerate}
  \item those that \textit{keep} the input shape (e.g. filtering)
	\item those that \textit{extend} the input shape
		with additional, newly defined fields (e.g. the formula transform)
	\item those that \textit{nest} the input shape (e.g. cross product)
  \item those that \textit{discard} the input
		shape and produce a different output shape (e.g. aggregation)
\end{enumerate}

\tinyvega includes one example of each kind; $\filter$ emits data for which the
given expression $e$ evaluates to \textsf{true}; $\formula$ evaluates the given expression and
adds the result to the datum as a new field; $\cross$ produces a cross product of the data
stream with itself, storing the two data points in a record as fields $f_a$ and $f_b$.
Finally, $\agg$ groups the input data according to a key consisting of the specified fields
and computes the specified aggregation operations on each group. The resulting data tuples
contain the keys, along with the computed aggregates stored in the output fields $f_{\out}$.

We omit transforms that emit signals and transforms working with GeoJSON and trees. We also omit transforms that combine
multiple data sources (e.g., $\mathsf{lookup}$). Those are infrequent, but are
theoretically interesting as they can introduce \emph{glitches}~(\autoref{subsec:accuracy}).
Finally, we omit transformations where the resulting shape depends on the values
in the input dataset (e.g., $\mathsf{pivot}$) as those cannot be type-checked statically.

\subsubsection*{Expressions}
Vega specifications can use a limited subset of JavaScript to specify computations inside
transforms. Most notably, expressions cannot introduce new variables and can only access the
current data tuple ($\datum$) and its fields, perform numerical operations and call a limited
number of built-in functions.

In \tinyvega, primitives include Booleans, numbers and strings. An access to the data tuple
or its (nested) field is modelled using a path accessor $p$. Simple expressions $a$
can appear as arguments to numerical operators and include values, path accessors, numerical
operators and calls to built-in functions. Expressions $e$ can also construct arrays and records.

\begin{figure}[t]
$\boxed{\textbf{Expression typing:}\quad T \vdash e : T' }$\hfill\mbox{}

\begin{align*}
	T &\vdash b : \bool \quad \text{\sc(T-eBool)}\\
	T &\vdash i : \num \quad \text{\sc(T-eNum)}\\
	T &\vdash s : \str\quad \text{\sc(T-eStr)}\\
  T &\vdash \datum : T\quad \text{\sc(T-eDatum)}
\end{align*}

$$
{ T \vdash p : \{ \ldots, f : T', \ldots \} \over T \vdash p.f : T' }\quad\text{\sc(T-eField)}
$$

$$
\inferrule{ \oplus \in \{+,-,\times,/\}\\\\ T\vdash e_1 : \mathsf{number}\quad T\vdash e_2 : \mathsf{number}}
	 {T \vdash e_1 \oplus e_2: \mathsf{number} }\quad\text{\sc(T-eArithNum)}
$$

$$
\inferrule{ T' \in \{ \str, \num, \bool \}\\\\
    T\vdash e_1 : \str \qquad T\vdash e_2 : T'}
	{T \vdash e_1 + e_2: \str}\quad\text{\sc(T-eAddStrL)}
$$

$$
\inferrule{ T' \in \{ \num, \bool \}\\\\
    T\vdash e_1 : T' \qquad T\vdash e_2 : \str}
	{T \vdash e_1 + e_2: \str}\quad\text{\sc(T-eAddStrR)}
$$

$$
{  \forall i \in 1..n\;.\; T \vdash e_i : T'
\over
 T \vdash [e_1,\ldots,e_n]: [T']
	}\quad\text{\sc(T-eArr)}
$$

$$
{ \forall i \in 1..n\;.\; T \vdash e_i : T_i
\over
\ T \vdash \{f_1\colon e_1,\ldots,f_n\colon e_n\}: \{f_1\colon T_1,\ldots,f_n\colon T_n\}
	}\quad\text{\sc(T-eObj)}
$$

$$
\inferrule{ \Sigma(\mathit{builtin}) = (T_1, \ldots, T_n) \Rightarrow T \\\\
  \forall i \in 1..n\;.\; T \vdash e_i : T_i}
 {T \vdash \mathit{builtin}(e_1, \ldots, e_n): T
	}\quad\text{\sc(T-eBuiltin)}
$$

\vspace{-0.5em}
	\caption{Expression judgment $T \vdash e : T'$.}
\vspace{-1.5em}
	\label{fig:expr}
\end{figure}

\begin{figure}[t]
$\boxed{\textbf{Aggregation operator typing:}\quad T \vdash \op : T' }$\hfill\mbox{}

$${
	T \vdash p_{\inn} : \num
	\over T \vdash \summ p_{\inn} \as f_{\out} : \num
}\quad\text{\sc(T-opSum)}
$$

$$
{\over T \vdash \countt\as f_{\out} : \num
}\quad\text{\sc(T-opCount)}
$$
\\[1.25em]

$\boxed{\textbf{Transform typing:}\quad T \vdash t : T' }$\hfill\mbox{}

$$\mathsf{name}(\datum.f_1.\ldots.f_n) = f_1.\ldots.f_n$$
$$\mathsf{out}(\summ p_{\inn} \as f_{\out}) = f_{\out}$$
$$\mathsf{out}(\countt\as f_{\out}) = f_{\out}$$

$$
{ T \vdash e : T'
\over
T \vdash \filter(e) : T
	}\quad\text{\sc(T-tFilter)}
$$

$$
{ T \vdash e : T_{\out}
\over
T \vdash \formula(e,f_{\out}) : T \cup \{ f_{\out} : T_{\out}\}
}\quad\text{\sc(T-tFormula)}
$$

$$
{
\over
T \vdash \cross(f_a, f_b) : \{f_a : T, f_b : T\}
	}\quad\text{\sc(T-tCross)}
$$

$$
\inferrule
 {\forall i \in 1..n\;.\;T \vdash p_i : T_i \quad \forall j \in 1..k\;.\;T \vdash \op_j : T'_j \\\\
  T_r = \{ \mathsf{name}(p_1) : T_1, \ldots,\mathsf{name}(p_n) : T_n,\\
    \phantom{xxxx}\mathsf{out}(\op_1) : T'_1, \ldots, \mathsf{out}(\op_k) : T'_k
    \}
  }
 {T \vdash \agg( p_1,\ldots,p_n,\op_1,\ldots,\op_k) : T_r }\quad\text{\sc(T-tAgg)}
$$\\[1.25em]


$\boxed{\textbf{Dataset definition typing:}\quad \Gamma \vdash d : T }$\hfill\mbox{}

$$
{ \forall i\in 1..k\;.\; T_{i-1} \vdash t_i: T_i \qquad \Gamma(n_{\rm in}) = T_0
\over
\Gamma \vdash \data(n_{\out}, n_{\inn}, t_1,\ldots,t_k): T_k
	}\quad\text{\sc(T-dData)}
$$
\\[1.25em]

$\boxed{\textbf{Specification typing:}\quad \Gamma \vdash \spec : \Gamma' }$\hfill\mbox{}

$$
\inferrule{
    \forall i\in 1..n\;.\; \Gamma_{i-1} \vdash d_i : T_i \\\\
     d_i = \data(\_, n_i, \ldots)\qquad \Gamma_i =\Gamma_{i-1}, n_{i}\!:\!T_i}
  {\Gamma_0 \vdash d_1,\ldots,d_n : \Gamma_n
	}\quad\text{\sc(T-spec)}
$$

\vspace{-0.5em}
\caption{Judgments that define the \tinyvega type system.}
\label{fig:types}
\end{figure}

\section{Type System}
\label{sec:types}

A well-typed \tinyvega specification does not access undefined fields or datasets not defined
previously. Our type system captures the idea formally, focusing on how data transformations
change the shape of data tuples. Our system recognizes a number of primitive types, records and
collections:
\begin{align*}
	\prim &\de \bool \orr \num \orr \str \\
	T &\de \prim \orr \{f_1: T_1, \ldots, f_n: T_n\} \orr [T]
\end{align*}
We define the type system using a series of judgments for individual syntactic categories of
\tinyvega; \autoref{sec:sem} adds a graph-based operational
semantics and \autoref{sec:sound} discusses soundness.

\subsubsection*{Expression judgment} The judgment $T\vdash e : T'$, defined in \autoref{fig:expr},
states that, upon receiving a $\datum$ of type $T$, the expression $e$ outputs a new $\datum$ of type $T'$. In the standard notation, this translates to $\{\datum : T\} \vdash e : \{ \datum : T'\}$.

Vega expressions can access only the $\datum$ variable (and signals and builtins, which we omit).
The {\sc T-eDatum} and {\sc T-eField} rules model access to the $\datum$ object and its fields.
We match the semantics of JavaScript by allowing the use of $+$ for string concatenation with an
implicit conversion for numbers and booleans ({\sc T-eAddStrL} and {\sc T-eAddStrR}).
Vega expressions can also create new collections ({\sc T-eArr}) and objects ({\sc T-eObj})
and call a range of primitive functions ({\sc T-eBuiltin}). We assume their types are given in $\Sigma$.

\subsubsection*{Transform judgment}
The judgment $T \vdash t : T'$ (\autoref{fig:types}) states that, given a data tuple of shape $T$,
the transform $t$ produces a data tuple of shape $T'$.
The formula transform ({\sc T-tFormula}) computes a value of type $T_{\out}$ and adds it as a~new field to the data tuple. We use the $\cup$ operator to denote the join of two data tuples on their keys, with the right-hand-side keys taking precedence.
Cross product ({\sc T-tCross}) creates a~new data tuple that contains two values of the
input type.
The output type of the filter transform ({\sc T-tFilter}) is the same as the input type,
but we keep the JavaScript semantics that allows the conditional to evaluate to a value of any type.

Finally, aggregation ({\sc T-tAgg}) produces a tuple that combines the fields used as grouping
keys with new fields defined by the aggregation operations ({\sc T-opSum} and {\sc T-opCount}).
The names of the keys are generated using the $\mathsf{name}$ helper, by concatenating the names
of the nested fields accessed. (In Vega, a field such as ``a.b'' is accessed using
\texttt{datum["a.b"]} to avoid an ambiguity with nested field access.) The names of new fields
are those specified in the aggregation operator, obtained by the $\mathsf{out}$ helper.

\subsubsection*{Dataset judgment}
For dataset and specification judgments, we use a typing context $\Gamma$, which maps
the dataset names to the type of the data tuples included in the dataset. (The type is the
type of the individual elements in the dataset.)

The rule {\sc T-dData} that defines the dataset judgment is shown in \autoref{fig:types}.
It specifies the type of data tuples for a~newly defined output dataset named $n_{\out}$
that is computed by a series of transformations from the input dataset named $n_{\in}$.
The sequence of types $T_0, \ldots, T_k$ starts from the type of the source and gradually
transforms the type using the transform judgment.

\subsubsection*{Specification judgment}
Finally, the judgment $\Gamma \vdash s : \Gamma'$ defined in \autoref{fig:types}
states that a specification $s$ transforms the initial dataset context $\Gamma_0$ to a newly
defined dataset $\Gamma_n$. This is done by iteratively using the dataset judgment and
adding the newly defined dataset to the previous typing context $\Gamma_{i-1}$.
Note that the definition models the fact that dataset definitions are sequential
and that datasets might be referenced only after their declaration.

\begin{figure*}
\centering
$\xymatrix@R=0.2pc{
  && \formula(\round(\mathit{avg\_words}\mathop{/}\gnum{100}), \mathit{avg\_hundreds}) \\
\mathit{books}\ar[r] &
  \agg(\datum.\mathit{classification},\ldots) \ar@/^0.5pc/[ur] \ar@/_0.5pc/[dr] &
  \\
  && \filter(\contains(\mathit{classification}, \text{\color{Bittersweet} \texttt{"English"}}))
}$
\\[1em]
$
\begin{array}{l}
\data(\mathit{books},\mathit{aggs}, \agg(\datum.\mathit{classification},\avg \datum.\mathit{words} \as \mathit{avg\_words}, \countt\as \mathit{book\_count}))\\
\data(\mathit{aggs}, \mathit{hundreds}, \formula(\round(\mathit{avg\_words}\mathop{/}\gnum{100}), \mathit{avg\_hundreds}))\\
\data(\mathit{aggs},\mathit{english}, \filter(\contains(\mathit{classification}, \text{\color{Bittersweet} \texttt{"English"}}))) \\
\end{array}
$
\vspace{-0.25em}
\caption{A \tinyvega specification that defines three data transformations and a corresponding
  graph. Vertices are annotated with the transformations they perform and edges model data flow
  in the graph.}
\label{fig:graph}
\end{figure*}

\section{Operational Semantics}
\label{sec:sem}

Our model of the Vega runtime is mainly based on the high-level description of Reactive
Vega \cite{reactive-vega-architecture-2016}, but it reflects some aspects of the implementation to show how real Vega transforms can be formalized using our model. The overall idea is based on propagating data tuple changes in a dataflow graph. The graph construction is akin to other
reactive and live programming environments \cite{Petricek2020,oeyen-2023-graph}, but it is
then used to propagate changes (adding, removing or updating data tuples).

The graph structure is illustrated in \autoref{fig:graph}. Vertices in the graph
correspond to data transformations and the edges represent how data flows between
them.

As our ultimate goal is to statically check dataset field access, we restrict ourselves to dataflow graphs with only transformation operators (i.e. 'pulse' parameters). We discuss how our model can be extended to allow limited usage of user input in~\autoref{subsec:accuracy}.

Formally, a \tinyvega dataflow graph $(V,E)$ consists of:
\[
\begin{array}{r@{\,}c@{\,}ll}
	V &\densp& \{(l_1,t_1)\ldots\}&\textit{transform $t_i$ with a unique label $l_i$}\\
	E &\subseteq& \{(v_i,v_j), i,j \in V \}& \textit{data from $v_i$ flows to $v_j$}\\
\end{array}
\]
To distinguish vertices that perform the same transformation in different parts
of the dataflow graph, we annotate them with a unique label $l_i$.

\subsection{Dataflow graph construction}
The process of dataflow graph construction for a given \tinyvega specification is
defined in \autoref{fig:builders}. It is structured in terms of three functions
for translating a transformation, dataset and a specification, respectively.
Each of these functions takes a partially built graph and produces an updated graph
with newly added vertices and edges.

\subsubsection*{Transform builder}
The $\buildT$ function is called with the graph constructed so far, a vertex $v$ representing
the data source and a transform. As illustrated in \autoref{fig:graph}, it constructs a new
vertex annotated with the transformation and a fresh label and an edge connecting it to the
source. The function then returns the modified graph and the new vertex.

\subsubsection*{Dataset builder}
$\buildD$ constructs a new graph in steps
$(V_1, E_1), \ldots, (V_k, E_k)$, by adding a vertex for each of the individual
transformations using $\buildT$. It operates on a $\lookup$ function
that maps dataset names to corresponding vertices, using it to find the vertex corresponding
to the input dataset~$n_{\inn}$ and adding a mapping for the output vertex $n_{\out}$ at the end.
Note that $\lookup(n_{\inn})$ is not defined if the input dataset does not appear earlier
in the Vega specification.

\subsubsection*{Specification builder}
Finally, $\buildS$ takes a sequence of dataset definitions. It constructs the graph iteratively
using $\buildD$ to process individual dataset definitions, constructing
a new graph and a lookup function that maps names of all datasets to the
corresponding vertex in the graph.

\vspace{0.5em}
\noindent
\tinyvega does not model primitive data sources such as files or URLs. We assume that the
specification uses
a set of pre-existent sources named ${n_{\inn} \in \mathcal{S}}$ and we construct special vertices
representing the input nodes annotated with $(n_{\inn}, \inputt)$. An initial graph is constructed by the function $\mathsf{build}(\mathcal{S}, \spec)$ as:
{
\renewcommand{\arraystretch}{1.2}
\[
\begin{array}{l}
\mathsf{build}(\mathcal{S}, \spec) = (V, E), \lookup\;\;\mathit{where}\\
\quad \lookup_0 = \{(n_{\inn} \mapsto (n_{\inn}, \inputt))\ |\ n_{\inn} \in \mathcal{S} \}\\
\quad V_0, E_0 = \{ (n_{\inn}, \inputt)\ |\ n_{\inn} \in \mathcal{S} \}, \emptyset\\
\quad (V, E), \lookup = \buildS_{(V_0, E_0), \lookup_0}(\spec)
\end{array}
\]
}

\begin{figure}
\raggedright
{
\renewcommand{\arraystretch}{1.2}
$\boxed{\textbf{Transform builder}\;\; \buildT}$
\\[1em]

$\begin{array}{l}
\buildT : (\mathbb V \times \mathbb E, V, t) \to (\mathbb V\times \mathbb E, V) \\[0.3em]
\buildT_{(V,E), v}(t) \coloneq (V \cup \{ v' \}, E' \cup \{ (v, v') \}), v'\\
\quad \textit{where}~v'=(l, t) ~\textit{and}~l~\textit{is a fresh label}
\end{array}$

\; \\[1em]
$\boxed{\textbf{Dataset builder}\;\; \buildD}$
\; \\[1em]

$\begin{array}{l}
\buildD : (\mathbb V \times \mathbb E, N \to V, d) \to (\mathbb V\times \mathbb E, N \to V)\\[0.3em]
\buildD_{(V_0,E_0), \lookup} (d) \coloneq (V_{k}, E_{k}), \lookup', v_{k} \textit{ where}\\
\quad \data(n_{\out}, n_{\inn}, t_1, \ldots, t_k) = d\\
\quad v_0 = \lookup(n_{\inn}),\\
\quad (V_i, E_i), v_i = \buildT_{(V_{i-1},E_{i-1}), v_{i-1}}(t_i)\quad (\forall i \in 1..k)\\
\quad \lookup' = \lookup \cup \{(n_{\out}, v_{k})\}
\end{array}$

\; \\[1em]
$\boxed{\textbf{Specification builder}\;\; \buildS}$
\; \\[1em]

$\begin{array}{l}
\buildS : (\mathbb V \times \mathbb E, N \to V, s) \to (\mathbb V\times \mathbb E, N \to V)\\[0.3em]
\buildS_{(V_0,E_0),\lookup_0}(d_1,\ldots,d_k) \coloneq (V_k, E_k), \lookup_k \textit{ where}\\
\quad (V_i, E_i), \lookup_i = \buildD_{(V_{i-1},E_{i-1}),\lookup_{i-1}}(d_i)\quad (\forall i \in 1..k)
\end{array}$
}
\caption{Builders that define data-flow graph construction for transforms,
datasets and \tinyvega specification.}
\vspace{-1em}
\label{fig:builders}
\end{figure}

\subsection{Graph-based evaluation}

The key aspect of the Reactive Vega implementation \cite{reactive-vega-architecture-2016} that
\tinyvega aims to model accurately is the graph-based streaming propagation of updates.
When a data tuple is added, removed, or modified at the data source, the execution engine
propagates the update through the dataflow graph and recomputes values of all datasets
that are derived from the source through a series of transformations.

\subsubsection*{Data tuple identity}
In the implementation of Vega \cite{Vega-v530}, data tuples are mutable objects with numerical identifiers stored under a special hidden key (symbol in JavaScript terminology). The implementation uses identifiers to propagate data changes to corresponding copies (e.g., the Relay transform) and to new objects containing the original tuples (e.g., TreeLinks transform).
Moreover, identifiers are used to maintain various lookup tables in stateful transformations.

Our model also needs to uniquely identify data tuples. Although the definitions do not directly leverage identifiers in the same way as Vega, they are important in proving the soundness of our semantics. Furthermore, the existence of identifiers makes for semantics that can be extended to model more Vega transformations.


We address this by annotating each data tuple with a symbolic trace $R$, inspired by
traces in provenance tracking \cite{perera-2012-explain}. As shown in \autoref{fig:eval-objects}, a trace $R$ can be a primitive number $n$, or a
term that uniquely represents a transformation applied to a data tuple; $(R_1, R_2)$ identifies a data tuple produced by joining two data tuples using $\cross$,
while $f_e(R)$ identifies a data tuple produced by applying $\formula$ which introduces the field $f$ computed by the expression $e$. Finally, $\aggb_{\mathbf{p}, c, \mathbf{sum}}(R)$ denotes a group produced by aggregating tuples using the paths $\mathbf{p}, c, \mathbf{sum}$ as described in \autoref{subsec:transeval}.


\subsubsection*{Values and changes}
\autoref{fig:eval-objects} defines values, data tuples and data changes. A value $v$ represents
primitive values and values computed by Vega expressions. 
A data tuple is an object (record) with a~trace field $\mu\!:\!R$. Choosing a special name $\mu$ corresponds to the fact that Vega stores identifiers under a symbol that cannot be accessed by a user.
As such, traces are not present in our type system, but we assert their properties in \autoref{sec:sound}
. Note that traces are only
needed at the top-level. There are three data changes (add, update, remove), and we also include
the $\error$ change that is triggered when the evaluation fails due to invalid field access.

\subsubsection*{Expression evaluation}
We assume a standard big-step evaluation relation $\Downarrow$ for Vega expressions.
Recall that expressions do not contain variables, but may contain a reference to $\datum$.
We write $e[\datum/v]$ for substitution that replaces the $\datum$ with a value $v$.
We write $e\!\Downarrow\!v$ if an expression $e$ evaluates to $v$ and
$e\cancel{\Downarrow}$ if the expression cannot be evaluated, e.g., because it attempts to
access a~non-existent record field.

\begin{figure}
  $\boxed{\textbf{Data tuples:}~$d$\,}$\hfill\mbox{}
  \\[-0.5em]
  \begin{alignat*}{3}
  		\text{Traces}\quad&&R&~\de n \orr (R_1, R_2) \orr f_e(R) \orr \mathsf{agg}_{\mathbf{p}, c, \mathbf{sum}}(R) \\
  		\text{Values}\quad&&\mathsf{v}&~\de b \in \texttt{Boolean} \orr n \in \texttt{Number} \orr s \in \texttt{String} \\
                        && &\;\;\;\orr\;\;\; [v_1,\ldots,v_n] \orr\{f_1\colon v_1, \ldots, f_n\colon v_n\}\\
      \text{Data tuples}\quad&&\mathsf{d}&~\de \{f_1\colon v_1, \ldots, f_n\colon v_n,\mu:R\}\\
  \end{alignat*}
  \\[-0.5em]

  $\boxed{\textbf{Data changes:}~$u$\,}$\hfill\mbox{}
  \\[-0.5em]
  \begin{align*}
  	\add(d)&\quad\text{data tuple $d$ has been created}\\
  	\update(d,d')&\quad\text{replace data tuple $d$ with $d'$; keep $\mu$}\\
  	\remove(d)&\quad\text{data tuple $d$ has been removed}\\
  	\error&\quad\text{error caused by invalid field access} 
  \end{align*}

\caption{Definition of runtime data tuples and data updates.}
\label{fig:eval-objects}
\end{figure}

\begin{figure*}
{
\renewcommand{\arraystretch}{1.1}
\renewcommand{\truthy}{\mathsf{truthy}}
\renewcommand{\falsy}{\mathsf{falsy}}
$\boxed{\textbf{Formula transformation}}$\hfill\mbox{}
\\[-0.75em]
\[
\begin{array}{r@{\;\;}c@{\;\;}l@{\quad}l}
v +_{f, e} v' &=& v \cup \{ f = v', \mu = f_e(v.\mu) \}\\[0.5em]
\llbracket \formula_{e, f}\rrbracket(\add(v)) &=& \{ \add(v +_{f,e} v_e )\} &(e[\datum/v] \Downarrow v_e)\\
\llbracket \formula_{e, f}\rrbracket(\remove(v)) &=& \{ \remove(v +_{f,e} v_e) \} &(e[\datum/v] \Downarrow v_e)\\
\llbracket \formula_{e, f}\rrbracket(\update(v, v_\new)) &=& \{ \update(v +_{f,e} v_e, v_\new +_{f,e} v_e') \} &(e[\datum/v] \Downarrow v_e \wedge e[\datum/v_\new]\ \Downarrow v_e')\\
\llbracket \formula_{e, f}\rrbracket(\_) &=& \{ \error \} &(\text{otherwise})\\
\end{array}
\]
\\[0.5em]

$\boxed{\textbf{Cross transformation}}$\hfill\mbox{}
\\[-0.75em]
\renewcommand{\state}{\mathsf{st}}
\[
\begin{array}{r@{\;\;}c@{\;\;}l}
v_a \times_{f_a, f_b} v_b &=& \{ f_a = v_a, f_b = v_b, \mu = (v_a.\mu, v_b.\mu) \}\\[0.5em]
\llbracket \cross_{f_a,f_b} \rrbracket_{\state}(\add(v)) &=&
      \{ \add(v \times_{f_a, f_b} v'), \forall v' \in \state\cup\{v\} \} \cup
      \{ \add(v'\times_{f_a, f_b} v), \forall v' \in \state \}, \;
      \state \cup \{v\}
\\
\llbracket \cross_{f_a,f_b} \rrbracket_{\state}(\remove(v)) &=&
      \{ \remove(v \times_{f_a, f_b} v'), \forall v' \in \state \} \cup
      \{ \remove(v' \times_{f_a, f_b} v), \forall v' \in \state\setminus\{v\}  \}, \;
      \state \setminus \{v\}
\\
\llbracket \cross_{f_a,f_b} \rrbracket_{\state}(\update(v, v_\new)) &=&
      \{ \update(v \times_{f_a, f_b} v', v_\new \times_{f_a, f_b} v'), \forall v' \in \state \} ~\cup \\&&
      \quad \{ \update(v' \times_{f_a, f_b} v, v' \times_{f_a, f_b} v_\new), \forall v' \in \state\setminus \{ v \} \},
      (\state \setminus \{ v \}) \cup \{ v_\new \}\\
\llbracket \cross_{f_a,f_b} \rrbracket_{\state}(\error) &=& \{ \error \}, \state \\
\end{array}
\]
\\[0.5em]

$\boxed{\textbf{Filter transformation}}$\hfill\mbox{}
\renewcommand{\state}{\mathsf{state}}
\\[-0.5em]
\[
\begin{array}{r@{\;\;}c@{\;\;}l@{\quad}l}
  \llbracket \filter_e \rrbracket(\add(v)) &=& \{ \add(v) \} &(e[\datum/v] \Downarrow v_e \wedge \truthy(v_e))\\
  \llbracket \filter_e \rrbracket(\add(v)) &=& \{ \} &(e[\datum/v] \Downarrow)\\[0.5em]
  \llbracket \filter_e \rrbracket(\remove(v)) &=& \{ \remove(v) \} &(e[\datum/v] \Downarrow v_e \wedge \truthy(v_e))\\
  \llbracket \filter_e \rrbracket(\remove(v)) &=& \{ \} &(e[\datum/v] \Downarrow)\\[0.5em]
  \llbracket \filter_e \rrbracket(\update(v, v_\new)) &=& \{ \update(v, v_\new) \} &(e[\datum/v] \Downarrow v_e \wedge \truthy(v_e) \wedge e[\datum/v_\new]\ \Downarrow v_e' \wedge \truthy(v_e'))\\
  \llbracket \filter_e \rrbracket(\update(v, v_\new)) &=& \{ \add(v_\new) \} &(e[\datum/v] \Downarrow v_e \wedge \falsy(v_e) \wedge e[\datum/v_\new]\ \Downarrow v_e' \wedge \truthy(v_e'))\\
  \llbracket \filter_e \rrbracket(\update(v, v_\new)) &=& \{ \remove(v) \} &(e[\datum/v] \Downarrow v_e \wedge \truthy(v_e) \wedge e[\datum/v_\new]\ \Downarrow v_e' \wedge \falsy(v_e'))\\
  \llbracket \filter_e \rrbracket(\update(v, v_\new)) &=& \{ \} &(e[\datum/v] \Downarrow \wedge\ e[\datum/v_\new] \Downarrow)\\[0.5em]
  \llbracket \filter_e \rrbracket(\_) &=& \{ \error \} &(\text{otherwise})\\
\end{array}
\]\\[0.5em]

\newcommand{\newGroup}{\mathsf{newGroup}}

$\boxed{\textbf{Aggregate transformation}}$\hfill\mbox{}
\renewcommand{\state}{\mathsf{st}}
\renewcommand{\counta}{\mathsf{c}}
\\[-0.5em]
\[
\begin{array}{r@{\;\;}c@{\;\;}l@{\quad}l}
  v +_{\counta, \mathbf p} v' &=& \{\counta = v.\counta + 1, {\mathbf p}_{\out} = v.{\mathbf p}_{\out} + v'.{\mathbf p}_{\inn}, \mu = v.\mu \} \\
  v -_{\counta, \mathbf p} v' &=& \{\counta = v.\counta - 1, {\mathbf p}_{\out} = v.{\mathbf p}_{\out} - v'.{\mathbf p}_{\inn}, \mu = v.\mu \} \\
  \phantom{\ \agg}\newGroup_{\mathsf{groupby}, \mathsf{count}, \mathbf p}(v) &=& \{\counta = 1, {\mathsf f}_{\out} = v.{\mathsf f}_{\inn}, \mu = \mathsf{agg}_{\mathsf{groupby}, \mathsf{count}, \mathsf f}(v.\mu) \}\\
  \groupOf_{\mathbf p}(\state, v) &=& \state(v.{\mathbf p})\end{array}\]\\[-0.5em]
  \[
  \begin{array}{r@{\;\;}c@{\;\;}l@{\quad}l}
  \llbracket \agg_{\mathbf p, \counta, \sumi}\rrbracket_{\state}(\add(v)) &=& \{\add(g)\}, \state + (v.{\mathsf f}, g) & (v.\mathbf p \not\in \state, g=\newGroup_{\mathbf {f}, \counta, \sumi}(v))\\
  \llbracket \agg_{\mathbf p, \counta, \sumi}\rrbracket_{\state}(\add(v)) &=& \{\update(g, g')\}, \state + (v.{\mathbf p}, g') & (v.\mathbf p \in \state, g=\state(v.\mathbf p), g' = g +_{\counta, \sumi} v)\\[0.5em]
\llbracket \agg_{\mathbf p, \counta, \sumi}\rrbracket_{\state}(\remove(v)) &=& \{\update(g, g')\}, \state + (v.{\mathsf f}, g') & (g=\state(v.\mathbf p), g.\counta > 1, g' = g -_{\counta, \sumi} v)\\
\llbracket \agg_{\mathbf p, \counta, \sumi}\rrbracket_{\state}(\remove(v)) &=& \{\remove(g)\}, \state - (v.{\mathsf f}, g) & (g=\state(v.\mathbf p), g.\counta=1)\\[0.5em]
\llbracket \agg_{\mathbf p, \counta, \sumi}\rrbracket_{\state}(\update(v, v_\new)) &=& \{\update(g, g')\}, \state + (v.{\mathsf f}, g') & (v.\mathbf p = v_\new.\mathbf p, g=\state(v.\mathbf p),
g' = g -_{\counta, \sumi} v +_{\counta, \sumi} v_\new)
\\
\llbracket \agg_{\mathbf p, \counta, \sumi}\rrbracket_{\state}(\update(v, v_\new)) &=& \parbox[t]{4cm}{$\{\remove(g), \add(g')\}$,\\
\phantom{\{r}$\state - (v.{\mathsf f}, g) + (v_\new.{\mathsf f}, g')$}
& \parbox[t]{6cm}{$(v.\mathbf p \ne v_\new.\mathbf p, g=\state(v.\mathbf p), g.\counta = 1$,\\
\phantom{(v}$v_\new.{\mathbf p} \not\in\state, g'=\newGroup_{\mathbf p, \counta, \sumi}(v_\new) )$}
\\
\llbracket \agg_{\mathbf p, \counta, \sumi}\rrbracket_{\state}(\update(v, v_\new)) &=& \parbox[t]{4cm}{$\{\remove(g), \update(g', g'')\},$\\
\phantom{\{r}$\state - (v.{\mathsf f}, g) + (v_\new.{\mathsf f}, g')$}
& \parbox[t]{6.5cm}{$(v.\mathbf p \ne v_\new.\mathbf p, g=\state(v.\mathbf p), g.\counta = 1, v_\new.{\mathbf p} \in\state$,\\ 
\phantom{(v}$g'=\state(v_\new), g'' =g'+_{\counta, \sumi}v_\new)$}
\\
\llbracket \agg_{\mathbf p, \counta, \sumi}\rrbracket_{\state}(\update(v, v_\new)) &=& \parbox[t]{4cm}{$\{\update(g, g''), \add(g')\},$\\ 
\phantom{\{u}$\state + (v.{\mathsf f}, g'') + (v_\new.{\mathsf f}, g')$}
& \parbox[t]{6.5cm}{$(v.\mathbf p \ne v_\new.\mathbf p, g=\state(v.\mathbf p), g.\counta > 1, v_\new.{\mathbf p} \not\in\state$,\\
\phantom{(v}$g'=\newGroup_{\mathbf p, \counta, \sumi}(v_\new), g'' =g-_{\counta, \sumi}v)$}
\\
\llbracket \agg_{\mathbf p, \counta, \sumi}\rrbracket_{\state}(\update(v, v_\new)) &=& \parbox[t]{4.5cm}{$\{\update(g, g''), \update(g', g''')\},$\\
\phantom{\{u}$\state + (v.{\mathsf f}, g'') + (v_\new.{\mathsf f}, g''')$}
& \parbox[t]{7cm}{$(v.\mathbf p \ne v_\new.\mathbf p, g=\state(v.\mathbf p), g.\counta > 1, v_\new.{\mathbf p} \in\state,$\\
\phantom{(v}$g'=\state(v_\new), g'' =g-_{\counta, \sumi}v, g''' =g+_{\counta, \sumi}v_\new)$}
\\
\llbracket \agg_{\mathbf p, \counta, \sumi}\rrbracket_{\state}(\_) &=& \error, \state &  \text{(otherwise)}

\end{array}
\]

\caption{Semantics of update propagation of core \tinyvega transformations}
\label{fig:core-transforms}
}
\end{figure*}

\subsection{Transform evaluation}
\label{subsec:transeval}
When a graph node receives a data change, it produces 0, 1 or more data changes that are sent
to subsequent nodes. We first define the evaluation of each of our four transformations.
There are two kinds of transformations. Stateful transformations need to maintain an internal
state, associated with the graph node. Stateless transformations process data changes without
keeping a state.
The definitions of core transformations are given in Figure~\ref{fig:core-transforms} and are
written as:
\[
\begin{array}{r@{\;\;=\;\;}ll}
\llbracket t \rrbracket_{\mathsf{state}}(u) & \{ u_1, \ldots, u_k \}, \mathsf{state}' & (\text{stateful transform})\\
\llbracket t \rrbracket(u) & \{ u_1, \ldots, u_k \} & (\text{stateless transform})
\end{array}
\]
For a given transform $t$, the semantic function receives a data update $u$ (and optionally
a transform-specific $\mathsf{state}$) and produces a set of downstream data
changes $u_1,\ldots,u_k$, possibly with an updated $\mathsf{state}'$. Furthermore, every transformation propagates an $\error$ upon receiving it.
Vertices labeled with $\inputt$ act as data sources and never receive any updates.

\subsubsection*{Formula}
The $\formula$ transform (\autoref{fig:core-transforms}) extends a data tuple $v$ with a
field $f$ obtained by evaluating the expression $e$. An important aspect of the semantics
is maintaining data tuple identity represented by a trace $v.\mu$. The helper operator $+_{f_e}$,
which adds the field $f$ also adds a new trace obtained from the original trace by wrapping it
with $f_e(v.\mu)$. As a result, if the transformation receives multiple updates to a data tuple
with the same identity, the downstream updates will also preserve the data tuple identity.

The $\formula$ rule evaluates the formula using $\Downarrow$. In case of an $\update(v,v_\new)$,
we reevaluate the previous value to produce a new downstream $\update$. If the evaluation
fails, the semantics produces an $\error$.

\subsubsection*{Cross}
The $\cross$ transform (\autoref{fig:core-transforms}) needs to maintain a~cross product of all
received data tuples with themselves, i.e., a~set $\{ (d_1, d_2), d_1\!\in\!D, d_2\!\in\! D \}$.
It uses the $\state$ to keep track of all values currently in the upstream dataset.
When receiving an update, the transform outputs the new state along with the set of updates.

To construct downstream updates with stable identity, the semantics defines a helper $\times_{f_a, f_b}$.
When it receives $\add(v)$, it computes cross product of $v$ with every other element $v' \in \state$,
yielding both $v \times v'$ and $v' \times v$ and also adding $v \times v$.
Handling of remove and update works similarly. 

\subsubsection*{Filter}
The $\filter$ transform (\autoref{fig:core-transforms}) produces a downstream dataset containing
only upstream data tuples for which the specified condition holds. We assume a pair of predicates
that model the JavaScript semantics of conditionals ($\mathsf{truthy}$ holds for non-zero numbers,
non-empty strings and all object and array values).

Changes $\add$ and $\remove$ are sent downstream if the condition holds. The handling of $\update$ needs to determine if the data tuple has been newly added (condition did
not hold previously, but holds now), updated or removed. Note that the transform always emits a
set of updates (possibly empty) if the evaluation succeeds. If the evaluation fails (or the
upstream update is $\error$, just like in other transformations), the transformation produces the $\error$ change.

\subsubsection*{Aggregate}
The $\agg$ transform (\autoref{fig:core-transforms}) groups the input tuples by the values at paths $p_1,..,p_n$.
For each of the groups, it computes its sum or count (we omit average, as it can be calculated by combining these two values) and outputs them in fields $\out(\op_1), \ldots, \out(\op_k)$. As it only makes sense to count groups at the root path, we assume there is at most one $\countt$ operation present in $\op_1, \ldots, \op_k$. To save space, we introduce boldface for vector notation:
\begin{itemize}
\item[--] $\mathbf p$ denotes $p_1,\ldots,p_n$
\item[--] $\mathbf a + \mathbf b$ denotes $a_1 + b_1, \ldots, a_n + b_n$ (assuming vectors $\mathbf a, \mathbf b$ of the same length $n$)
\item[--] $v.\mathbf f$ denotes either $\{v.\mathbf f_1, \ldots v.\mathbf f_n\}$ or its expansion $v.\mathbf f_1,\linebreak, \ldots, v.\mathbf f_n$ if already present in another object
\item[--] $v.\mathbf f = \mathbf f'$ then means $v.f_1 = f'_1, v.f_2 = f'_2, \ldots, v.f_n = f'_n$
\item[--] Finally, assuming $\mathbf f = \op_1,\ldots,\op_k$, $\mathbf f_{\out}$ denotes $\out(f_1),\linebreak \out(f_2), \ldots, \out(f_n)$ (and similarly for $\mathbf f_{\inn}$).
\end{itemize}
Putting this all together, we define the semantics of $\agg$ using these three parameters:
\begin{itemize}
\item[--] $\mathbf p$ (the paths to group the tuples by), 
\item[--] $\counta$ or $c$ (name of the field representing the group size),
\item[--] and $\sumi$ (the paths at which to compute a sum)
\end{itemize}
The $\state$ is a mapping from possible values at $\mathbf p$ to the currently produced nonempty groups. The group of a tuple $v$ is then accessed via $\state(v.\mathbf p)$, and its existence is tested by $v.\mathbf p \in \state$.
Note that both of these actions result in $\error$ if some of the paths from $\mathbf p$ do not exist in $v$.

When $\agg$ receives $\add(v)$, it creates a new group consisting only of the tuple itself (using the identity $v$ for the new group), or it updates the statistics for an existing group. Similarly in $\remove(v)$, we distinguish whether we should just $\update$ the target group or propagate a $\remove$ change. The cases in $\update$ are similar to those in $\filter$. Either the group of the tuple has not changed, and so we just update the $\sumi$s accordingly. Otherwise, the group has changed, so we need to remove from one group and add to another. We consider 4 cases, handling the situations where the old group becomes empty or the new group does not exist yet.

This whole transformation has one subtle source of errors.
If we somehow $\remove$ or $\update$ a nonexistent tuple~$v$, none of the cases for $\remove$ or $\update$ respectively will match, and the evaluation instead falls-through to the error-propagation state. We address this issue in the graph evaluation algorithm, but it is important to realize that this is why tuple identities are useful in our model.












\subsection{Graph evaluation}
\label{sec:semantics-grapheval}
We define change propagation in a manner close to Vega, which means we propagate the changes with respect to a~topological ordering $\prec$ on vertices. The only additional information we track in our model are the observed tuples in each node, which helps us prove safety.
%
%
%
Let us proceed by defining the state and update logic of the graph. 

Let $\gs : V \to \DataChange[]$ represent the list of unprocessed data changes
in each vertex,
and $\ls : V \to S$ represent the local state for each vertex (if needed),
maintaining the state of stateful transformations. Furthermore, let $\invals : V \to R \to d$ represent the data tuples currently residing in the graph nodes.
We define a step of reduction $\langle \gs, \ls, \invals\rangle \tto \langle \gs', \ls',\invals'\rangle$ 
where we process a~single event in our graph and transform the current state into a~new one.

\begin{enumerate}
	\item Select and remove the first data change $c$ from $\gs(v)$ where $v$ is the smallest w.r.t. $\prec$
    \begin{itemize}
    \item[$\bullet$] If there are no more data changes, return the original $\gs$ and $\ls$.
    \end{itemize}
    \item If $c = \add(t)$:
    \begin{itemize}
    \item[--] If $t.\mu \in \invals(v)$ then $\langle \gs, \ls, \invals\rangle \not{\tto}$
    \item[--] otherwise: $\invals' \gets \invals + (v, \invals(v) + (t.\mu, t))$
    \end{itemize}
    \item If $c = \update(t, t')$:
    \begin{itemize}
    \item[--] If $t.\mu \not\in \invals(v)$ or $\invals(v)(t.\mu) \ne t$:
    \begin{itemize}\item[$\bullet$] $\langle \gs, \ls, \invals\rangle \not{\tto}$\end{itemize}
    \item[--] otherwise: $\invals' \gets \invals + (v, \invals(v) + (t.\mu, t'))$
    \end{itemize}
    \item If $c = \remove(t)$:
    \begin{itemize}
    \item[--] If $t.\mu \not\in \invals(v)$ or $\invals(v)(t.\mu) \ne t$:\begin{itemize}\item[$\bullet$]$\langle \gs, \ls, \invals\rangle \not{\tto}$\end{itemize}
    \item[--] otherwise $\invals' \gets \invals + (v, \invals(v) - (t.\mu, t))$
    \end{itemize}
	\item Let $\llbracket t, \ls(v), c \rrbracket = C, \text{newState}$
	\item $\ls' \gets \ls + (v, \text{newState})$
    \item $\gs' \gets \gs$
    \item \text{For each $v'\in V$ such that there exists an edge $(v, v')$:}
    \begin{itemize}
        \item[--] $\gs'(v') \gets \gs'(v') \cup C$
    \end{itemize}
\end{enumerate}
 
We say $\langle \gs_0, \ls_0, \invals_0 \rangle \tto^n \langle \gs_n, \ls_n,\invals_n\rangle$
if there exists a sequence $\gs_i, \ls_i, \invals_i$
such that $\forall i \in 0..n$, it holds that $\langle \gs_i,\ls_i, \invals_i\rangle \tto \langle \gs_{i+1},\ls_{i+1}, \invals_{i+1}\rangle$.

\subsection{Accuracy and Extensions}
\label{subsec:accuracy}
Two main omissions from our semantics, when compared with the description in \autoref{subsec:runtime} are multiple data sources and signals. We briefly discuss these here.

\subsubsection*{Multiple Data Sources}
Because our semantics closely follows Vega's dataflow evaluation, adding multiple data sources does not introduce any difficulties. The only interesting situation would be transformations with multiple dependencies, which needs to avoid glitches~\cite{cooper-2006-dataflow}, i.e., a situation where a change to one dependency is propagated first, temporarily resulting in an inconsistent state. In Vega, merging of data from multiple sources is handled by \emph{materializing} the data (applying the changes to the source dataset), and \emph{collecting} the materialized data in a single array. This makes sure that all the data is sent out at once and avoids glitches.

\subsubsection*{Signals}
Signals, along with events, are a key concept of E-FRP, and they are the reason why it is hard to statically check Vega specifications. Indeed, they can be technically used anywhere except for identifiers, which must be determined statically.
Despite this, we can still allow them in specifications when they are used to a certain extent.

A signal is defined by a small number of expressions (init/update/bind and event handlers). When the expressions are all strings (plain parameters), this allows for shape inference of signals in the same way as for the $\formula$ transform, since we can just combine the expressions together appropriately.
Thus, signal definitions are not particularly problematic to analyze.

If we use signals to parameterize field or dataset names, we cannot guarantee correctness as the user can input any value into the signal at runtime. However, using signals in expressions poses no problems as long as we only need their shape to determine the shape of the expression.

\section{Properties}
\label{sec:sound}

Our key result is that the evaluation of well-typed \tinyvega specifications on well-formed input
never yields the $\error$ update. In the operational semantics, $\error$ is produced when
an expression evaluation $\Downarrow$ fails (in $\formula$ or $\filter$), when a~path
is not defined (in $\agg$), or when the data change is not well-formed, i.e., a data update references a nonexistent tuple (in $\agg$).

To show that this is the case, we first state the soundness of evaluation and show that
well-typed specification has a~unique derivation and a corresponding dataflow graph.
We then prove the usual progress and preservation lemmas.

\begin{assumption}[Evaluation soundness]
\label{asm:evalsound}
We assume $\Downarrow$ such that if $T \vdash e : T'$ then for all values $v$
(as defined in \autoref{fig:eval-objects}), if $\vdash v : T$ then
$e[\datum/v]\Downarrow v'$ and $\vdash v' : T'$.
\end{assumption}

\begin{lemma}[Uniqueness of typing derivation]
\label{lem:uniqueness}
If $\Gamma \vdash \spec : \Gamma'$ and $\Gamma \vdash \spec : \Gamma''$ then
$\Gamma' = \Gamma''$. Moreover the typing derivations for the two judgments are the same.
\end{lemma}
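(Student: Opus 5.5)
The plan is to prove a stronger statement by structural induction, layer by layer, following the stratification of the judgments in \autoref{fig:expr} and \autoref{fig:types}. For each judgment form $T \vdash e : T'$, $T \vdash \op : T'$, $T \vdash t : T'$, $\Gamma \vdash d : T$ and $\Gamma \vdash \spec : \Gamma'$, we show the following: whenever the inputs to the left of the turnstile and the syntactic object are fixed, there is at most one output type, and at most one derivation producing it. The lemma is then the case of the specification judgment. Each layer uses only the layers proved before it.

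\textbf{Expressions.} We induct on the structure of $e$. The rules are essentially syntax-directed: each syntactic form of $e$ is the conclusion of exactly one rule, with one exception.
\begin{itemize}
\item Literals, $\datum$, field access, arrays, objects and builtins each have a single rule. For these we apply the induction hypothesis to the premises. For {\sc T-eField}, uniqueness of the record type of $p$ gives a unique field type $T'$. For {\sc T-eArr}, all $e_i$ have unique types, so $T'$ is forced. For {\sc T-eBuiltin}, the signature $\Sigma(\mathit{builtin})$ is a fixed function of the name.
\item The exception is $e_1 + e_2$, which may be concluded by {\sc T-eArithNum}, {\sc T-eAddStrL} or {\sc T-eAddStrR}. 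By the induction hypothesis the types of $e_1$ and $e_2$ are unique. The side conditions of the three rules are then mutually exclusive:
  \begin{itemize}
  \item {\sc T-eArithNum} needs both operands to be $\num$;
  \item {\sc T-eAddStrL} needs the left operand to be $\str$;
  \item {\sc T-eAddStrR} needs the left operand to be in $\{\num,\bool\}$ and the right to be $\str$.
  \end{itemize}
  So the pair of operand types selects at most one rule, and hence one conclusion type and one derivation.
\end{itemize}
This overlap is the main obstacle I expect. It must be handled with care, including checking that the membership constraint $T'$ in {\sc T-eAddStrL} does not overlap with {\sc T-eAddStrR} when both operands are strings. It does not, because {\sc T-eAddStrR} excludes $\str$ on the left. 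The remaining subtle point is that the unspecified $\Sigma$ is assumed to be a function.

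\textbf{Operators and transforms.} Each operator form has exactly one rule ({\sc T-opSum}, {\sc T-opCount}; the avg rule analogously). The sum rule uses the expression-level uniqueness for $p_{\inn}$. Each transform constructor likewise has exactly one rule. The output type is then determined by $T$, by the unique types of the premises, and by the deterministic helpers $\mathsf{name}$, $\mathsf{out}$ and $\cup$. For example, in {\sc T-tFilter} the type of $e$ is unique even though it does not appear in the conclusion, so the derivation is still unique.

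\textbf{Datasets and specifications.} For {\sc T-dData}, $T_0 = \Gamma(n_{\inn})$ is unique because $\Gamma$ is a finite map. An inner induction on $i$ shows that each $T_i$ and its subderivation are unique, using the transform case. For {\sc T-spec}, we induct on the number of definitions $n$. Each $\Gamma_{i-1}$ is unique by the induction hypothesis, so $T_i$ is unique by the dataset case, so $\Gamma_i = \Gamma_{i-1}, n_i : T_i$ is unique. Since {\sc T-spec} is the only rule for specifications, $\Gamma' = \Gamma''$ and the two derivations coincide.
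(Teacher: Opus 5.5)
Your route is the paper's own. Its proof is the one-liner that the rules are deterministic and that the only non-syntax-directed rules (\textsc{T-eArithNum}, \textsc{T-eAddStrL}, \textsc{T-eAddStrR}) have non-overlapping premises. Your layered induction, with the pairwise disjointness check on operand types, is a correct elaboration of that.

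\textbf{The failing step is \textsc{T-eArr}.} Your claim that ``all $e_i$ have unique types, so $T'$ is forced'' holds only for $n\ge 1$. The grammar does not rule out the empty literal $[\,]$, and it is a legal Vega expression. For it the premise is vacuous, so $T\vdash [\,] : [T']$ for every $T'$. This reaches the top level in two ways:
\begin{itemize}
\item $T\vdash \formula([\,], f) : T\cup\{f : [T']\}$ for every $T'$, so a dataset defined by this single transform has infinitely many types and $\Gamma'$ is not unique.
\item $\filter([\,])$ has the unique conclusion $T$ but infinitely many distinct derivations.
\end{itemize}
No argument can close this case. The lemma must be restricted, e.g.\ to non-empty array literals, or to array literals carrying an element-type annotation. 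The underlying point is that syntax-directedness gives uniqueness only when each rule's conclusion is determined by its premises, and \textsc{T-eArr} at $n=0$ violates this. The paper's one-line proof overlooks the same case.

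\textbf{A second, smaller hole is in your \textsc{T-eField} step.} A unique record type for $p$ yields a unique field type only if record types have pairwise distinct labels. Neither \textsc{T-eObj} nor \textsc{T-tAgg} enforces this. For example, if $\datum.x : \str$, then \textsc{T-tAgg} assigns $\agg(\datum.x, \countt\as x)$ the type $\{x : \str, x : \num\}$. A subsequent $\formula(\datum.x, g)$ can then be typed both with $g : \str$ and with $g : \num$. You need distinct labels as an explicit well-formedness condition on types, so that these rules do not apply when labels clash. It must be assumed for the types in $\Gamma$ and preserved along each transform chain; the right-biased $\cup$ in \textsc{T-tFormula} does preserve it.

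With both restrictions made explicit, the rest of your induction goes through as written.
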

\begin{proof}
Typing rules are deterministic. The only non-syntax-directed rules
({\sc T-eArithNum}, {\sc T-eAddStrL}, {\sc T-eAddStrR}),
have non-overlapping assumptions.
\end{proof}

\begin{lemma}[Graph existence]
\label{lem:existence}
For any context $\Gamma = n_{\init}:T$, if
$\Gamma \vdash \spec : \Gamma'$ then the dependency graph
$(V, E), \lookup = \mathsf{build}(\{ n_{\init} \}, (d_1, \ldots, d_n))$
exists and is well-defined.
\end{lemma}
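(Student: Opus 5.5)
The only way $\mathsf{build}$ can fail to be defined is in $\buildD$, where $v_0 = \lookup(n_{\inn})$ is looked up. $\buildT$ is total: it always returns a fresh vertex and an edge. $\buildS$ merely iterates $\buildD$. So the plan is to show, by induction along the sequence $d_1,\ldots,d_n$, that every lookup succeeds. The invariant links the typing contexts of {\sc T-spec} to the lookup functions of $\buildS$:

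\emph{Invariant.} After processing $d_1,\ldots,d_i$, the graph $(V_i,E_i)$ and the function $\lookup_i$ are defined, and $\mathrm{dom}(\lookup_i) \supseteq \mathrm{dom}(\Gamma_i)$, where $\Gamma_i$ is the $i$-th context in the (unique, by \autoref{lem:uniqueness}) derivation of $\Gamma \vdash \spec : \Gamma'$.

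\emph{Base case.} We have $\Gamma_0 = \Gamma = n_{\init}:T$. The function $\mathsf{build}(\{n_{\init}\},\spec)$ sets $\lookup_0 = \{n_{\init} \mapsto (n_{\init},\inputt)\}$ and $V_0 = \{(n_{\init},\inputt)\}$, $E_0 = \emptyset$. So the domains coincide.

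\emph{Inductive step.} The premise of {\sc T-spec} gives $\Gamma_{i-1} \vdash d_i : T_i$. Inverting {\sc T-dData}, $d_i = \data(n_{\out}, n_{\inn}, t_1,\ldots,t_k)$ with $\Gamma_{i-1}(n_{\inn}) = T_0$ defined. By the hypothesis, $n_{\inn} \in \mathrm{dom}(\lookup_{i-1})$, so $v_0 = \lookup_{i-1}(n_{\inn})$ is defined and is a vertex of $V_{i-1}$.

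Next, an inner induction over $t_1,\ldots,t_k$ applies $\buildT$ repeatedly. Each step adds a vertex with a fresh label and an edge from an existing vertex. This yields $(V_k,E_k)$ and the final vertex $v_k$, and then $\lookup_i = \lookup_{i-1}\cup\{(n_{\out},v_k)\}$.

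\emph{Naming mismatch.} One point needs care. {\sc T-spec} extends the context with the name $n_i$ taken from the pattern $d_i = \data(\_, n_i,\ldots)$, whereas $\buildD$ adds $n_{\out}$ to $\lookup$. I will read these as the same newly defined dataset name, which is the paper's intent: $\Gamma_i$ records the output dataset. Under that reading, $\mathrm{dom}(\Gamma_i) = \mathrm{dom}(\Gamma_{i-1})\cup\{n_{\out}\} \subseteq \mathrm{dom}(\lookup_i)$, and the invariant is preserved.

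\emph{Well-definedness.} Beyond totality, I will check three further properties, each of which follows from freshness of labels and the shape of $\buildT$:
\begin{enumerate}
\item Vertices are unique, since labels are fresh.
\item Every edge $(v,v')$ has $v$ already in $V$ and $v'$ fresh, so all edges point to newly created vertices.
\item As a consequence of the previous item, the graph is acyclic. Ordering vertices by creation time gives a topological order $\prec$, which the evaluation in \autoref{sec:semantics-grapheval} requires.
\end{enumerate}
Finally, I note that $\lookup$ always maps into $V$, and that overwriting $\lookup$ on a redefined name is harmless, since a later definition simply maps that name to a newer vertex.

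The main obstacle is the bookkeeping mismatch between the name that {\sc T-spec} adds to the context and the name that $\buildD$ adds to $\lookup$. The rest is routine structural induction.
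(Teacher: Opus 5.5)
Your proposal is correct and takes the same approach as the paper. The paper's proof only observes that the sole partial step of $\mathsf{build}$ is $\lookup(n_{\inn})$ in $\buildD$, and that typing ensures datasets are defined before use; your invariant $\mathrm{dom}(\Gamma_i) \subseteq \mathrm{dom}(\lookup_i)$ makes this precise, and your freshness and acyclicity checks go beyond what the paper verifies. The naming mismatch you flag is not actually an obstacle: read literally, {\sc T-spec} adds the name in the second position of $\data$, which {\sc T-dData} already requires to lie in $\mathrm{dom}(\Gamma_{i-1})$, so your invariant holds under either reading.
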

\begin{proof}
Graph construction is only undefined if $\lookup(n_\in)$ was undefined in $\buildD$, but
our type system ensures that datasets are defined before they are accessed.
\end{proof}

\noindent
Note that we only consider case with a single external data source. This is a simplification
without a loss of generality, because each of our transformations can only depend on a~single
data source.
Now, the two results mean that, for each transform $t$ in a well-typed specification, there
is a unique corresponding typing derivation $T \vdash t : T'$ that assigns input and output
types $T$ and $T'$ to the transform. Moreover, there is also a corresponding graph vertex:

\begin{lemma}[Correspondence]
  If $n_{\init}\!:\!T \vdash \spec\!:\!\Gamma'$ then
  $(V, E), \lookup = \mathsf{build}(\{ n_{\init} \}, \spec)$
  and for every vertex $v$, $v=(l,t) \in V$, there is a unique corresponding
  typing derivation $T \vdash t : T'$ for the transform $t$.
\end{lemma}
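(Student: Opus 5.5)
We prove the statement by unfolding both the typing derivation of $\spec$ and the construction $\mathsf{build}(\{n_{\init}\},\spec)$ in lockstep, proceeding by induction on the number of dataset definitions $d_1,\ldots,d_n$. Existence of the graph is given by Lemma~\ref{lem:existence}, and uniqueness of the derivation of $n_{\init}\!:\!T \vdash \spec : \Gamma'$ is given by Lemma~\ref{lem:uniqueness}. It therefore remains to build a map from vertices to transform subderivations $T_{i-1}\vdash t_i : T_i$ occurring in that unique derivation, and to show it is well-defined. The input vertex $(n_{\init},\inputt)$ carries no transform, so the claim is read only for vertices $(l,t)$ introduced by $\buildT$. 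For that vertex we simply associate the type $T$ as its output type.

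The induction invariant is as follows. After processing $d_1,\ldots,d_i$ via $\buildS$, the resulting graph $(V_i,E_i)$ and lookup $\lookup_i$ satisfy two conditions.
\begin{enumerate}
\item Every non-input vertex $(l,t)\in V_i$ corresponds to exactly one transform premise $T_{j-1}\vdash t : T_j$ appearing in the {\sc T-dData} subderivation of some $d_m$, $m\le i$.
\item For every dataset name $n$ in the domain of $\lookup_i$, the output type associated with the vertex $\lookup_i(n)$ equals $\Gamma_i(n)$.
\end{enumerate}
The base case is $\lookup_0$ and $\Gamma_0 = n_{\init}\!:\!T$. In the step, {\sc T-spec} gives $\Gamma_{i-1}\vdash d_i : T_i$. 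By {\sc T-dData}, this yields a chain $T'_0,\ldots,T'_k$ with $\Gamma_{i-1}(n_{\inn}) = T'_0$. By the invariant, $v_0=\lookup_{i-1}(n_{\inn})$ has output type $T'_0$.

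Next we run the inner induction over the $k$ calls to $\buildT$. The $j$-th call creates a vertex $(l_j,t_j)$ with a fresh label, and we associate it with the premise $T'_{j-1}\vdash t_j : T'_j$, so its output type is $T'_j$. Freshness of labels means each new vertex is distinct from all earlier ones, so no vertex receives two associations and condition 1 is preserved. Finally, $\lookup_i$ extends $\lookup_{i-1}$ with $n_{\out}\mapsto v_k$, whose output type is $T'_k = T_i$, which restores condition 2. Uniqueness of the correspondence then follows because the association is determined by the construction, and the derivation it points into is itself unique by Lemma~\ref{lem:uniqueness}. Any other derivation $T\vdash t:T''$ with the same input type would be the same derivation, since the transform rules are syntax-directed.

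The main obstacle is a mismatch between the typing rules and the builders, which must be reconciled carefully rather than glossed over. {\sc T-spec} binds the name $n_i$ taken from the \emph{first} component of $\data(\_, n_i,\ldots)$, while $\buildD$ uses the first component as $n_{\out}$ and the second as $n_{\inn}$. I would first fix a consistent reading of the argument order, taking the one used in the \autoref{fig:graph} example, and state it explicitly. Only then would I verify that $\lookup$ and $\Gamma$ have the same domain at every stage. A second subtlety is that the statement writes the input type as $T$, which coincides with the context's type only for the first transform. I would interpret it as ``some input type $T_{j-1}$'', namely the one determined by the invariant above.
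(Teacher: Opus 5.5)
Your proposal is correct and follows the paper's route. The paper's proof is only ``follows from Lemma~\ref{lem:uniqueness} and Lemma~\ref{lem:existence}''. Your lockstep induction, with the invariant that the output type attached to $\lookup_i(n)$ is $\Gamma_i(n)$, is exactly the bookkeeping that one-liner leaves implicit. Your observations that the $\inputt$ vertex carries no transform, and that the $T$ in the statement must be read as the local input type $T_{j-1}$, are both right.

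One slip: {\sc T-spec} binds the \emph{second} argument of $\data(\_, n_i, \ldots)$, not the first. The actual conflict is that it binds the name that {\sc T-dData} and $\buildD$ treat as $n_{\inn}$, whereas $\buildD$ binds the first argument. As you describe it, there would be no mismatch at all. Your remedy is still the right fix: adopt the (input, output) order of the examples and re-read {\sc T-dData} and $\buildD$ accordingly.
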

\begin{proof}
  Follows from Lemma~\ref{lem:uniqueness} and Lemma~\ref{lem:existence}.
\end{proof}

%

\noindent
The proof of the main soundness theorem will proceed by induction over $\rightarrow$.
The key step is to show that the semantics of individual transformation operations
turns well-typed upstream data changes into well-typed downstream data changes. The typing
of data changes is defined in \autoref{fig:update-types} and types the data passed
in the changes. Note that the $\error$ change is not typable. For stateful
transformations ($\cross$, $\agg$) we also need to ensure that values stored in the state are well-typed and that the received data changes are \textit{well-formed}. 

\begin{figure}
$$
\inferrule{\vdash v : T}{\vdash \add(v) : T}
\quad \inferrule{\vdash v : T}{\vdash \remove(v) : T}
\quad \inferrule{\vdash v : T\quad \vdash v' : T}{\vdash \update(v,v') : T}
$$ 
\caption{Typing of data updates}
\label{fig:update-types}
\end{figure}


\newcommand{\build}{\mathsf{build}}

\newcommand{\sep}{\mathbin{|}}

\begin{definition}[Well-formed input]
    An array of data changes $A = [c_1,\ldots,c_n]$ is said to be well-formed if $$\langle \gs \sep \ls, \invals\rangle \tto^n \langle \gs_n \sep \ls_n, \invals_n\rangle$$ for a dataflow graph $G$ with a single $\inputt$ vertex $v_\init$ with its changeset $\gs(v_\init) = A$ (otherwise, $\gs(v) = []$), and $\ls(v) = []$. The graph $G$ along with $\langle \gs \sep \ls, \invals\rangle$  is then referred to as a graph with well-formed input.
\end{definition}

\begin{definition}[Well-typed state]
Let $\mathrm{cod}(f)$ denote the co\-domain of a function $f$. Given a well-typed $\spec$, for each dataflow graph vertex $v=(l,t)$ with its unique corresponding
typing derivation $T \vdash t\!:\!T'$, we say that a~$\state$ is well-typed, written $\vdash_t \state~\mathsf{ok}$:
\begin{itemize}
\item[--] If $t = \cross$ then $\vdash_t \state~\mathsf{ok}$ if $\forall v \in \state\,.\,\vdash v : T$.
\item[--] If $t = \agg$ then $\vdash_t\state~\mathsf{ok}$ if $\forall v \in \mathrm{cod}(\state)\,.\,\vdash v : T'$.
\end{itemize}

\end{definition}

\begin{lemma}[Typing progress]
\label{lem:progress}
Let $n_\init\!:\!T \vdash \spec\!:\!\Gamma'$, and $(V, E), \lookup = \mathsf{build}({n_\init}, \spec)$, and $\langle\gs \sep \ls, \invals\rangle$ be a graph with well-formed input. If $\langle\gs \sep \ls, \invals\rangle \tto^* \langle\gs' \sep \ls', \invals'\rangle$,
then either $\gs' = \emptyset$ or there exists $\gs'', \ls''$ and $\invals''$ such that $\langle\gs', \ls', \invals'\rangle \tto \langle \gs'' \sep \ls'', \invals''\rangle$.
\end{lemma}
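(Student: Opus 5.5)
We prove the statement by strengthening it to an invariant that is preserved by every step $\tto$, and then read off progress from the invariant. For a reachable configuration $\langle\gs', \ls', \invals'\rangle$ we aim to show three things.

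\begin{enumerate}
\item[(i)] For every vertex $v=(l,t)$ with corresponding typing derivation $T_v \vdash t : T'_v$ (which exists and is unique by the Correspondence lemma), every change in $\gs'(v)$ is typable at $T_v$ by \autoref{fig:update-types}; in particular no $\error$ occurs.
\item[(ii)] $\vdash_t \ls'(v)~\mathsf{ok}$.
\item[(iii)] A consistency condition linking the pending changes, $\invals'(v)$ and $\ls'(v)$: replaying $\gs'(v)$ on $\invals'(v)$ never adds an already present trace, and never updates or removes a tuple whose trace is absent or whose stored value differs. Moreover, for $\cross$ the state equals the set of tuples in $\invals'(v)$, and for $\agg$ the state's groups are exactly the aggregates of $\invals'(v)$ keyed by $\mathbf p$.
\end{enumerate}

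Progress then follows directly. If $\gs' \neq \emptyset$, pick the $\prec$-least vertex with a pending change $c$. By (iii), the guards in steps 2--4 of the reduction succeed. By (i), $c$ is well-typed, so the only way to get stuck would be for the transform semantics to be undefined, and that is excluded below.

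The base case uses well-formed input. For the input vertex, the definition of well-formed input directly gives the required replay property on $A$; we must also take the input tuples to be typed at $T$. All other queues are empty and all states are $[]$, so (ii) and (iii) hold trivially. (If the intended reading is that the $\tto^n$ run in the definition already witnesses the input, we simply restrict to prefixes of that run.)

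The inductive step is a case analysis on the transform $t$ at the processed vertex, and each case needs both typing preservation and definedness.
\begin{itemize}
\item For $\formula$ and $\filter$, \autoref{asm:evalsound} gives that $e[\datum/v]\Downarrow v_e$ with the expected type, so the non-error clause fires. The outputs have type $T_v \cup \{f:T_{\out}\}$, respectively $T_v$, matching {\sc T-tFormula} and {\sc T-tFilter}.
\item For $\cross$, outputs have type $\{f_a:T_v, f_b:T_v\}$ as long as the state entries are typed at $T_v$, which is (ii).
\item For $\agg$, the paths $\mathbf p$ and sum paths are defined on well-typed tuples by {\sc T-eField} and {\sc T-opSum}. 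The sum fields are numbers, so $+_{\counta,\sumi}$ and $-_{\counta,\sumi}$ are defined, and the constructed groups have type $T_r$.
\end{itemize}
Since the successor $v'$ of $v$ has $T_{v'} = T'_v$ (by {\sc T-dData} and the way $\buildD$ chains vertices), appending the output changes $C$ to $\gs(v')$ preserves (i).

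The main obstacle is (iii), for two reasons. First, $\agg$ falls through to $\error$ whenever a $\remove$ or $\update$ names a group absent from the state. Second, downstream vertices may get stuck if an upstream transform emits an $\add$ for an existing trace or an update to a stale value. We therefore plan to prove, per transform, that its output stream is itself well-formed relative to the downstream $\invals$, using traces as identities:
\begin{itemize}
\item $f_e(R)$ is injective in $R$, so $\formula$ preserves well-formedness.
\item $\filter$ forwards a tuple only when it is currently present downstream, which follows from the truthiness case split on old and new values.
\item $\cross$ preserves well-formedness because pairs $(R_1,R_2)$ are unique and the state mirrors the upstream tuple set.
\item $\agg$ uses $\mathsf{agg}_{\ldots}(R)$ as a fresh group identity and keeps $\mathsf{st}$ in sync with the counts.
\end{itemize}
The delicate point is that downstream queues hold changes emitted at different times. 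We will exploit topological processing and FIFO order: all changes in $\gs(v')$ were produced by a single predecessor in emission order, so the relative-replay property composes. We expect to have to state (iii) as ``the sequence $\gs'(v)$ replayed from $\invals'(v)$ is well-formed'' rather than as a pointwise condition, in order to make this induction go through.
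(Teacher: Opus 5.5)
Your skeleton is the paper's: induction along the run. Well-formedness covers the steps at $v_\init$, and a per-transform argument shows that a well-formed change stream entering a vertex yields one for its unique successor. The paper phrases this contrapositively, tracing a change that fails a guard back to the predecessor; your invariant (iii) makes explicit what the paper leaves implicit. You differ usefully in carrying typing through the invariant and in assuming the input tuples are typed at $T$. The statement omits that hypothesis, and it is genuinely needed. If $e[\datum/v]$ fails when a $\filter$ receives $\add(v)$, the successor never sees $v$. A later update of that tuple between two values on which $e$ evaluates to something truthy is then forwarded as an $\update$ and fails the successor's guard. The paper's argument never invokes typing.

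The gap is the step you defer: per-transform preservation of (iii) is false for the semantics as written, even on well-typed input, so the induction cannot be closed.

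\textbf{Aggregation.} The group identity is not fresh. $\mathsf{newGroup}(v)$ takes its trace $\mathsf{agg}_{\ldots}(v.\mu)$ from the tuple that founds the group, $+_{c,\sumi}$ and $-_{c,\sumi}$ keep it, and updates keep $\mu$. Take one dataset derived from $n_\init:\{k:\str\}$ by $\agg(\datum.k,\countt\as c)$ followed by $\filter(\mathsf{true})$. Use the input $[\add(v_1),\add(v_2),\update(v_1,v_1')]$, where $v_1,v_2$ have traces $1,2$ and key $a$, and $v_1'$ has the same trace $1$ and key $b\neq a$. The third change falls in the case $g.c>1$, $v_1'.\mathbf p\notin\mathsf{st}$. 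That case emits $\update(g,g'')$ and $\add(\mathsf{newGroup}(v_1'))$, both carrying trace $\mathsf{agg}_{\ldots}(1)$. In either order, the filter vertex meets an $\add$ whose trace is already in its $\invals$, so the run is stuck with $\gs'\neq\emptyset$.

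\textbf{Cross.} Your claim fails here too. In the $\update(v,v_\new)$ rule the first set ranges over all of $\mathsf{st}$, including $v$ itself. So the diagonal pair becomes $v_\new\times v$ instead of $v_\new\times v_\new$. A later $\remove(v_\new)$ then emits $\remove(v_\new\times v_\new)$, which fails the successor's value check.

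\textbf{Other defects.} You would also hit several literal problems:
\begin{itemize}
\item The $\agg$ cases emitting $\{\remove(g),\update(g',g'')\}$ and $\{\update(g,g''),\update(g',g''')\}$ store $g'$ instead of $g''$, and build $g'''$ from $g$ instead of $g'$.
\item $\mathsf{newGroup}$ and $+_{c,\sumi}$ drop the group-by fields, so groups are not of type $T_r$.
\item Outputs are merged with $\cup$, so ``emission order'' is undefined inside one $C$.
\end{itemize}

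So the lemma is false as stated, and the paper's one-line case analysis passes over exactly these cases. Your invariant method does work once the semantics is repaired. For example, derive a group's trace from its key value, emit $\update(v\times v,v_\new\times v_\new)$ on the diagonal, and fix the $\agg$ cases. Then all changes in one $C$ concern distinct traces, their order is irrelevant, and (iii) is preserved by a routine per-transform check.
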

\begin{proof}
By induction on the length $k$ of $\langle\gs \sep \ls, \invals\rangle \tto^k \langle\gs' \sep \ls', \invals'\rangle$.
\begin{itemize}
\item If $k$ is less than the length of well-formed input, the well-formedness guarantees the existence of the next step.
\item Otherwise, assume that $\gs'$ of the first $k \ge 1$ vertices (w.r.t. topological order) is empty. Both vertices correspond to the transformations in \autoref{fig:core-transforms}. This means that we are currently processing changes coming from the $k$-th vertex to the $(k+1)$-th vertex. We can check by case analysis that if there was a bad change that would lead to \uv{$\not{\tto}$} during evaluation (e.g. $\add(t)$ for an already observed vertex), we could trace the problematic change to the predecessor of the $k$-th vertex.
\end{itemize}
\end{proof}

\begin{lemma}[Typing preservation]
\label{lem:preservation}
Assume a well-typed $\spec$ with a well-formed input, $\langle \gs \sep \ls, \invals\rangle \tto^* \langle \gs' \sep \ls', \invals'\rangle$,
 and $\langle \gs' \sep \ls', \invals'\rangle \tto \langle \gs'' \sep \ls'', \invals''\rangle$ at the vertex $v = (l, t)$ with $T \vdash t : T'$, $\vdash_t \state~\mathsf{ok}$, and the propagated change $u \in 
 \gs'(v)$, $\vdash u : T$ resulting in $\llbracket t\rrbracket_{\state}(u) = X, \state'$.
  Then $\vdash_t \state'~\mathsf{ok}$ and $\forall u' \in X$ it holds that $\vdash u' : T'$, and consequently $\error \not\in X$.


\end{lemma}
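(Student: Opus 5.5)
The plan is a case analysis on the transform $t$ at vertex $v$ and on the shape of the incoming change $u$. Since $\vdash u : T$ holds and $\error$ is not typable, $u$ is one of $\add(d)$, $\remove(d)$ or $\update(d,d_\new)$, with the carried tuples of type $T$. So the catch-all $\error$-propagation clauses never fire on the \emph{input} side. For each transform we have to show three things: some non-error clause of \autoref{fig:core-transforms} applies, every emitted change has type $T'$, and the new state is well-typed. The last conjunct, $\error\not\in X$, then follows at once, because every element of $X$ is typable and $\error$ is not.

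The stateless transforms come first. For $\formula(e,f)$, the unique derivation from the Correspondence lemma ends in {\sc T-tFormula} with $T\vdash e:T_\out$ and $T' = T\cup\{f:T_\out\}$. Applying Assumption~\ref{asm:evalsound} to each tuple carried by $u$ gives $e[\datum/d]\Downarrow v_e$ with $\vdash v_e : T_\out$. So the corresponding $\add/\remove/\update$ clause applies, and $d +_{f,e} v_e$ has type $T\cup\{f:T_\out\}$; the trace field $\mu$ is ignored by typing. For $\filter(e)$, {\sc T-tFilter} gives $T\vdash e:T''$ for some $T''$, so evaluation again succeeds on every tuple. Since $\truthy$ and $\falsy$ are complementary, exactly one of the non-error clauses matches. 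Each such clause emits only tuples that already occurred in $u$, so they have type $T = T'$.

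Next comes $\cross(f_a,f_b)$, where $T'=\{f_a:T,f_b:T\}$. No clause can fail. Every emitted pair $v\times v'$ has its components taken from $\state\cup\{d,d_\new\}$, and these all have type $T$, using $\vdash_t\state~\mathsf{ok}$ and $\vdash u:T$. The new state adds only typed tuples or removes tuples, so $\vdash_t\state'~\mathsf{ok}$.

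The main obstacle is $\agg$, whose $\remove$ and $\update$ clauses require the group $\state(d.\mathbf p)$ to exist. Typing alone ensures only that the paths $\mathbf p$ exist in $d$, via the premises $T\vdash p_i:T_i$ and Assumption~\ref{asm:evalsound} applied to the path expressions; it does not ensure that the group is present in the state. For that I would use the hypothesis that the step $\tto$ at $v$ actually occurred. By steps 3--4 of the graph evaluation, $d.\mu\in\invals(v)$ with $\invals(v)(d.\mu)=d$. I would then prove an auxiliary invariant by induction on $\tto^*$: for every $\agg$ vertex, $\mathrm{cod}(\state)$ consists exactly of the groups of the tuples in $\invals(v)$, that is, $\state(k)$ exists iff some $d'\in\invals(v)$ has $d'.\mathbf p = k$, and its count $\counta$ equals the number of such tuples. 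With this invariant, a $\remove(d)$ or $\update(d,d_\new)$ always finds $g=\state(d.\mathbf p)$. The case split on $\counta=1$ versus $\counta>1$, and on whether $d_\new.\mathbf p\in\state$, is then exhaustive, so a non-error clause matches. For typing, groups built by $\newGroup$ or changed by $\pm_{\counta,\sumi}$ have the key fields with types $T_i$ and the output fields with type $\num$. The $\num$ part uses {\sc T-opSum}, which gives $T\vdash p_\inn:\num$, so sums of numbers remain numbers. Hence these groups have type $T_r=T'$, which gives both $\vdash u':T'$ and $\vdash_t\state'~\mathsf{ok}$. Establishing and maintaining the auxiliary invariant across all the $\agg$ clauses is the delicate step. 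If that proves awkward, I would fall back on the Typing progress lemma (Lemma~\ref{lem:progress}) as the source of well-formedness of $u$.
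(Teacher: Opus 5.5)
Your route is the paper's route. The paper's entire proof is ``by case analysis on the transformation $t$'', and your treatment of $\formula$, $\filter$ and $\cross$ fills in what it leaves implicit. You are also right that $\agg$ carries the real content. Its $\remove$ and $\update$ clauses need $\state(d.\mathbf p)$ to exist, typing does not provide this, and the paper tacitly relies on its informal remark that the $\invals$ checks rule out touching nonexistent tuples.

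The gap is in how you establish your invariant linking the $\agg$ state to $\invals(v)$. It cannot be proved by induction on $\tto^*$ from the lemma's hypotheses, because these only constrain the current change $u$. Suppose an earlier change reaching the $\agg$ vertex lacks a grouping path. The step still rewrites $\invals(v)$, but the transform returns $\error,\state$ with the state untouched, so the two diverge.

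Concretely, take the single transform $\agg(\datum.k,\countt\as \mathit{cnt})$ on input type $\{k:\num\}$. Feed it the input stream $\add(d)$, $\update(d,d')$, $\update(d',d'')$, $\remove(d'')$, where $d=\{k:1,\mu:1\}$, $d'=\{\mu:1\}$ and $d''=\{k:2,\mu:1\}$.
\begin{itemize}
\item All $\invals$ checks pass, so the input is well-formed in the paper's sense.
\item After the third change the state is still $\{1\mapsto g\}$, while $\invals(v)$ holds $d''$.
\item The final change is well-typed, the state is well-typed, and the step occurs.
\item Yet $\state(2)$ is undefined, so the output is $\error$.
\end{itemize}
So the stated hypotheses, read literally, do not suffice.

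The repair is to prove your invariant jointly with preservation, as a single induction over $\tto^*$. Its invariant states three things: every pending change in $\gs(w)$ has the input type of $w$, every $\ls(w)$ is well-typed, and every $\agg$ state agrees with $\invals(w)$ in your sense. This needs two assumptions the paper does not state. First, all input changes are well-typed. Second, $\invals$ starts empty; the definition of well-formed input fixes only $\ls$. With these, your case analysis is exactly the inductive step, and soundness follows at once.

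Your fallback does not help. The progress lemma only asserts that a step exists, i.e.\ that the $\invals$ checks pass. Your hypotheses already grant this, and it says nothing about the $\agg$ state.

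A smaller point concerns the group types. Read literally, $\mathsf{newGroup}$ and $\pm_{\mathsf c,\mathbf p}$ build groups containing only the count, the sums and $\mu$. They omit the key fields $\mathsf{name}(p_i)$ that $T_r$ requires. Your claim that groups carry the keys at types $T_i$ therefore relies on the intended rather than the written definitions.
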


\begin{proof} By case analysis on the transformation $t$.
\end{proof}

The soundness for \tinyvega follows from progress and preservation in the standard way and states that, for any well-typed specification,
we can construct a dataflow graph and, for any updates sent to the graph, the graph-based
streaming evaluation will never emit the $\error$ update.

%
%

\section{Vega Specification Checker}
\label{sec:checker}
We have developed a proof-of-concept tool for checking invalid field access based on our type system. Besides following the type system, the tool also takes Vega's unexpected defaults into account.
The tool, along with examples of interesting invalid specifications, is available online at \url{https://github.com/kalivtrope/epsilon-lyrae/}.

\section{Related work}
\label{sec:related}

\subsubsection*{Vega Design and Implementation}
Vega has been primarily described from the user-centric perspective
\cite{satyanarayan-2014-vega} and the work has inspired a range of follow-up libraries
that refine or simplify the Vega programming model including Altair \cite{VanderPlas2018}
and Vega Lite \cite{vega-lite-2017}. Reactive Vega \cite{reactive-vega-architecture-2016}
describes the underlying streaming architecture, but a formal model of the semantics of
Vega has not been previously described.
Systems based on functional composition  \cite{petricek-2021-compost} may be more amenable to strong type checking.

\subsubsection*{Visualization grammars in practice}
The use of Vega-like libraries in practice has been subject
to research. Pu and Kay \cite{pu-2023-practice} point out that analysts sometimes make
``hard-to-evaluate silent errors'' also known as mirages \cite{mcnutt-2020-mirages}. Our type
system does not check for semantic errors, but may be able to prevent silent
errors of a more technical nature.
Linters \cite{chen2021,mcnutt2018linting} aim to recognize more
conceptual design errors, albeit heuristically, while Draco \cite{draco-2019} aims to
limit the surface area for errors by using high-level constraints.

Profilers and debuggers also have their equivalents
for data visualization. VegaProf \cite{yang-2023-profiling} instruments Vega runtime to
collect performance information---and exploring the performance implications of various design
choices discussed in our semantics would be an interesting use of such tool.
Debuggers have focused on helping the user understand data visualizations visually
\cite{vega-debugging-2016}; similar tools have been implemented using program slicing
and tracing techniques \cite{perera-2022-galois}.

Lu et al. \cite{lu-2025-bugs} classify bugs in the implementations
of data visualization libraries. A number of documented bugs could be prevented with the help
of precise operational semantics.

\subsubsection*{Functional reactive programming}
Vega is reactive and its transformations are akin to functional data processing.
Research on functional reactive programming thus provides valuable
references for defining the semantics of Vega.

Both the original FRP \cite{elliott-1997-fran} and systems based on arrows
\cite{wan-2000-frp,courtney-2003-yampa} are implemented using
dataflow graphs, although arrows make this more explicit.
The dataflow graph in our Vega semantics is static and so our system avoids problems
with memory leaks in the context of FRP \cite{krishnaswami-2013-leaks,bahr-2022-leaks}.

The semantics of FRP system has been described in multiple ways,
including a direct denotational style \cite{elliott-2009-pushpull},
using process categories \cite{jeltsch-2014-categorical} and ultrametric spaces
\cite{krishnaswami-2011-ultrametric}.

Our approach is closer to the graph-based operational semantics defined by Oeyen et
al.~\cite{oeyen-2023-graph}, which is based on two kinds of reduction rules---one kind for
rewiring the dependency graph and another for propagating values. In our model, the graph
is static and so we separate those more explicitly into graph construction and graph evaluation.

The semantics of arrow-based functional programming has also recently been formalized
in Rocq \cite{ischard-2025-frp} 
on arrows.

\subsubsection*{Adaptive and incremental computation}
Finally, the Vega runtime can also be linked to self-adjusting, adaptive and incremental
computation. In self-adjusting computation \cite{acar-2005-self}, the result is recomputed
efficiently when an input changes without recomputing the whole program. The implementation
is based on graphs and the change-propagation algorithm \cite{acar-2002-adaptive} is akin
to our graph evaluation (\autoref{sec:semantics-grapheval}).
The key difference is that Vega only works with datasets, rather than arbitrary data structures
and that the graph structure (in our model) remains static.

More generally, a range of incremental and adaptive systems use a form of dependency graph
to propagate updates to inputs. These include distributed systems \cite{murray-2016-timely},
and systems where propagation is on-demand \cite{hammer-2014-adapton,perera-2005-delta}.
The need for precise semantics of such systems has been shown by recent work on
Rocq formalization \cite{kumar-2025-incremental}, which also shows the correctness of a number
of optimizations.

\section{Conclusion}
Traditionally, formal semantics has been used to describe the behavior of elegant
programming systems. Recent works formalizing
the behavior of x86 processors~\cite{sewell-2010-x86} or React~\cite{Madsen2020ASF,lee-2025-reacttrace} show the importance of formally
studying real-world messy systems. Our work is a contribution to this effort. 

We presented a formal model of
data transformations in the widely used Vega library. Our calculus
accurately models evaluation as propagation of data updates through a dataflow
graph. We then described a sound type system that can prevent invalid field accesses in Vega
specifications. Although this is a seemingly simple error, it can have surprising
consequences for Vega visualizations yielding, for example, empty visualizations and
confusing warnings. 

\paragraph{Acknowledgments.}
We thank Jiří Beneš, jury members of the Student Research Competition at \guilsinglleft Programming\guilsinglright{} 2025,
as well as the anonymous referees of the paper for their valuable feedback. The work has been supported by the Charles University
grant PRIMUS/24/SCI/021 and by the Czech Ministry of Education, Youth and Sports grant ERC-CZ LL2325.

\balance
\bibliography{main}


\end{document}